\newcommand{\suparrow}{{\uparrow}}
\newcommand{\inc}{{+}{+}}
\newcommand{\dec}{{-}{-}}
\newcommand{\ceil}[1]{\left\lceil #1 \right\rceil}
\newlist{casesp}{enumerate}{3} 
\setlist[casesp]{align=left, 
                 listparindent=\parindent, 
                 parsep=\parskip, 
                 font=\normalfont\bfseries, 
                 leftmargin=0pt, 
                 labelwidth=0pt, 
                 itemindent=.4em,labelsep=.4em, 
                 partopsep=0pt, 
                 }
\setlist[casesp,1]{label=Case~\arabic*:,ref=\arabic*}
\setlist[casesp,2]{label=Case~\thecasespi.\roman*:,ref=\thecasespi.\roman*}
\setlist[casesp,3]{label=Case~\thecasespii.\alph*:,ref=\thecasespii.\alph*}
\title{The Well Structured Problem for Presburger Counter Machines} 
\author{Alain Finkel}{LSV, ENS Paris-Saclay, CNRS, Universit\'{e}
Paris-Saclay, France \and UMI ReLaX}{alain.finkel@ens-paris-saclay.fr}{}{}
\author{Ekanshdeep Gupta}{Chennai Mathematical Institute, Chennai, India \and UMI ReLaX} {ekanshdeep@cmi.ac.in}{}{}{}
\authorrunning{A. Finkel and E. Gupta}
\keywords{Well structured transition systems, infinite state systems, Presburger counter machines, reachability, coverability}
\begin{document}

\maketitle

\begin{abstract}
We introduce the \emph{well structured problem} as the question of whether a model (here a counter machine) is well structured (here for the usual ordering on integers). We show that it is undecidable for most of the (Presburger-defined) counter machines except for Affine VASS of dimension one. However, the \emph{strong well structured problem} is decidable for \emph{all} Presburger counter machines. While Affine VASS of dimension one are not, in general, well structured, we give an algorithm that computes the set of predecessors of a configuration; as a consequence this allows to decide the well structured problem for $1$-Affine VASS.
\end{abstract}

\section{Introduction}
{\bf Context: }
Well Structured Transition Systems (WSTS) \cite{F90,FINKEL200163} are a well-known model to solve termination, boundedness, control-state reachability and coverability problems. 
It is well known that Petri nets and Vector Addition Systems with States (VASS) are WSTS and that Minsky machines are not WSTS. But the characterization of counter machines which are well structured (resp. with strong monotony) is surprisingly unknown. Moreover, given a counter machine, can we decide whether it is well structured (resp. with strong monotony)? These questions are relevant since a positive answer could allow to verify \emph{particular instances} of undecidable models like Minsky machines and counter machines.
In this paper, we consider \emph{Presburger counter machines (PCM)} where each transition between two control-states is labelled by a Presburger formula which describes how each counter is modified by the firing of the transition. The PCM model includes Petri nets, Minsky machines and most of the counter machine models studied in the literature, for example counter machines where transitions between control-states are given by affine functions having Presburger domains \cite{DBLP:conf/cav/BoigelotW94,DBLP:conf/fsttcs/FinkelL02}. \\

\noindent {\bf Affine VASS: }
In an Affine VASS (AVASS), transitions between control-states are labelled by affine functions whose matrices have elements in $\mathbb{Z}$ (and not in $\mathbb{N}$ as usual). AVASS extends VASS (where transitions are translations) and positive affine VASS (introduced as self-modified nets in \cite{DBLP:conf/icalp/Valk78} and studied as affine well structured nets in \cite{FMP-wstsPN-icomp}. \cite{DBLP:conf/fsttcs/BonnetFP12} extends the Rackoff technique to AVASS where all matrices are larger than the identity matrix: for this subclass, coverability and boundedness are shown in EXPSPACE. The variation of VASS which may go below $0$, called $\mathbb{Z}$-VASS, is studied in \cite{DBLP:conf/rp/HaaseH14} and for their extension, \emph{$\mathbb{Z}$-Affine} VASS, reachability is shown NP-complete for VASS with resets, PSPACE-complete for VASS with transfers and undecidable in general \cite{BR19,DBLP:conf/concur/BlondinHM18}; let us remark that all $\mathbb{Z}$-Affine VASS have \emph{positive} matrices.


Moreover AVASS allow the simulation of the zero-test so they are at least as expressive as Minsky machines. But for dimension one, AVASS are more expressive than Minsky machines: in fact, $Post^*$ is computable as a Presburger formula for $1$-counter Minsky machines but this is not the case for $1$-AVASS which can generate the set of all the powers of $2$ (this set is not the solution of any Presburger formula).

The computation of the set $Pre^*$ of all predecessors of a configuration is effective for $2$-VASS (extended with one zero-test and resets) \cite{FLS-fsttcs18} as a Presburger formula and for pushdown automata \cite{BEFMRWW-ipl2000} as a regular language. But the computation of $Pre^*$ fails for $3$-VASS and for Pushdown VAS since $Pre^*$ is neither semilinear nor regular \cite{DBLP:conf/icalp/LerouxST15}. \\

\noindent
{\bf Our contributions:}

We introduce two new problems related to well structured systems and Presburger counter machines. The so-called \emph{well structured problem}: (1) given a PCM, is it a WSTS (for the usual ordering on integers) ? and the \emph{strong well structured problem}: (2) given a PCM, is it a WSTS with \emph{strong} monotony? 

We prove that  the well structured problem is undecidable for PCM even if restricted to dimension one ($1$-PCM) with just Presburger functions (i.e. piecewise affine functions); undecidability is also verified for Affine VASS in dimension two ($2$-Affine VASS). The undecidability proofs use the fact that Minsky machines can be simulated by both $1$-PCM and $2$-Affine VASS. However, we prove the decidability of the well structured problem for $1$-Affine VASS (which subsumes $1$-Minsky machines). 

Since the strong monotony can be expressed as a Presburger formula, the strong well structured problem (with the usual ordering on integers) is decidable for \emph{all} PCMs; moreover, we show that the decidability of the strong well structured problem can be extended to PCMs equipped with \emph{any} quasi ordering defined by a Presburger formula. 

Most of these results are summarised below: \\

\begin{tabularx}{\textwidth}{|l|X|X|}
\hline
    & Well Structured Problem & Strong Well Structured Problem \\ \hline
PCM & U  & {\bf D} \\ \hline
Functional $1$-PCM & {\bf U} [Theorem \ref{1-dim-pcs-wsts}] & D \\ \hline
$2$-AVASS &  U & D  \\ \hline
$2$-Minsky machines & {\bf U} [Theorem  \ref {2-ctr-minsky-machine}] & D  \\ \hline
$1$-AVASS & {\bf D} [Theorem \ref{1-aff-WSTS-dec}] & D  \\ \hline
\end{tabularx}\\

We give an algorithm that computes $Pre^*$ of a $1$-AVASS and this extends a similar known result for $1$-Minsky machines and $1$-VASS (and for pushdown automata \cite{BEFMRWW-ipl2000}). The computation of $Pre^*$ allows us to give a simple proof that reachability and coverability are decidable for $1$-AVASS (in fact reachability is known to be PSPACE-complete for polynomial one-register machines  \cite{FGH-mfcs13} which contains $1$-AVASS). Moreover, the computation of $Pre^*$ allows to decide the well structured problem for $1$-AVASS. These results are summarised below: \\

\begin{tabularx}{\textwidth}{|X|X|X|}
\hline
 & Reachability & Coverability \\ \hline
$1$-PCM (functional ) & U & {\bf U} [Corollary \ref{1-pcm-cov-undec}]  \\ \hline
$1$-AVASS & {\bf D} [Corollary \ref{1-avass-reachability}] & D \\ \hline
$d$-totally positive AVASS & {\bf D} [Theorem \ref{tot-pos-avass-reach-dec}] & D \\ \hline
$d$-positive AVASS ($d \geq 2$) & {\bf U} [Theorem \ref{pos-2-avass-reach-undec}] & {\bf D} [WSTS] \\ \hline
$2$-AVASS & U & {\bf U} [Corollary \ref{2-avass-cov-undec}] \\ \hline
\end{tabularx} \\

\noindent{\bf Outline:} We introduce in Section $2$ two models, well structured transition systems (WSTS) and Presburger counter machines (PCM); we show that the property for an ordering to be well is undecidable. Section $3$ analyses the decidability of the well structured problems for many classes of PCM and Affine VASS. Section $4$ studies the decidability of reachability and coverability for the classes studied in Section $3$.


\section{Counter machines and WSTS} \label{section:counter-machines}

A relation $\leq$ on a set $E$ is a \emph{quasi ordering} if it is reflexive and transitive; it is an ordering if moreover $\leq$ is antisymetric.
A quasi ordering $\leq$ on $E$ is a \textit{well quasi ordering (wqo)} if for all infinite sequences of elements of $E$, $(e_i)_{i \in \mathbb{N}}$, there exists two indices $i < j$ such that $e_i \leq e_j$.
For an ordered set $(E, \leq)$ and a subset $X \subseteq E$, the upward closure of $X$ denoted by $\suparrow X$ is defined as follows: $\suparrow X = \{ x \mid \exists y \in X \text{ such that } y \leq x\}$. 
$X$ is said to be \textit{upward closed} if $X = \suparrow X$.

\subsection{Arithmetic counter machines}
A \emph{$d$-dim arithmetic counter machine (short,  d-arithmetic counter machine or an arithmetic counter machine)} is a tuple $M=(Q, \Phi, \rightarrow)$ where $Q$ is a finite set of control-states, $\Phi$ is a set of logical formulae with $2d$ free variables $x_1, ..., x_d, x'_1, ..., x'_d$ and $\rightarrow \subseteq Q \times \Phi \times Q$ is the transition relation between control-states. We can also without loss of generality assume that $\rightarrow$ covers $\Phi$, i.e. $\Phi$ does not have unnecessary formulae. A configuration of $M$ refers to an element of $Q \times \mathbb{N}^d$. The operational semantics of a $d$-arithmetic counter machine $M$  is a transition system $S_M=(Q \times \mathbb{N}^d,\rightarrow)$ where $\rightarrow \subseteq (Q \times \mathbb{N}^d) \times (Q \times \mathbb{N}^d) $ is the transition relation between configurations. For a transition $(q,\phi,q')$ in $M$, we have a transition $(q; x_1, ..., x_d) \rightarrow (q'; x'_1, ..., x'_d)$ in $S_M$ iff $ \phi(x_1, ..., x_d, x'_1, ..., x'_d)$ holds. Note that we are slightly abusing notation by using the same $\rightarrow$ for both $M$ and $S_M$.
We may omit $\Phi$ from the definition of a counter machine if it is clear from context. 

A $d$-dim arithmetic counter machine $M$ with \emph{initial configuration} $c_0$ is defined by the tuple $M=(Q, \Phi, \rightarrow, c_0)$ where $(Q, \Phi, \rightarrow)$ is a $d$-arithmetic counter machine and  $c_0 \in Q \times \mathbb{N}^d$ is the initial configuration.
An arithmetic counter machine is \emph{effective} if  the transition relation is decidable (there is a decidable procedure to determine if there is a transition $x \rightarrow y$ between any two configurations $x, y$) and this is the case when it is given by an algorithm, a recursive relation, or decidable first order formulae (for instance Presburger formulae). An arithmetic counter machine is said to be \emph{functional} if each formula in $\Phi$ that labels a transition in $M$ defines a partial function.

Most usual counter machines can be expressed with Presburger formulae. It is well known that Presburger arithmetic with congruence relations without quantifiers is equivalent in expressive power to standard Presburger arithmetic \cite{Haase:2018:SGP:3242953.3242964}.
\begin{definition}
A \emph{Presburger counter machine (PCM)} is an arithmetic counter machine $M=(Q,\Phi, \rightarrow)$ such that $\Phi$ is a set of Presburger formulae with congruence relations without quantifiers.
\end{definition}

\begin{proposition} \label{pcm-functional-prop}\cite{DBLP:journals/jancl/DemriFGD10}
The property for a $d$-dim PCM to be \emph{functional} is decidable in NP.
\end{proposition}

\begin{proof}
	Let $M=(Q,\Phi,\rightarrow)$ be a given $d$-dim PCM.
	Functionality can be expressed in Presburger arithmetic as follows: 
	\begin{align*}
	\bigwedge\limits_{\phi \in \Phi} &( \forall x_1 ... \forall x_d \forall x'_1 ... \forall x'_d \forall x''_1 ... \forall x''_d \\
	&(\phi(x_1, ..., x_d, x'_1, ..., x'_d) \land \phi(x_1, ..., x_d, x''_1, ..., x''_d) \implies \bigwedge_{i=1}^d x'_i = x''_i)
	\end{align*}
	
	Hence, the validity of this formula can be decided, and so, functionality is decidable.
\end{proof}

\emph{Minsky machines} with $d$ counters are $d$-PCM $M=(Q,\Phi, \rightarrow)$ where $\Phi$ consists of either translations with upwards closed guards, or formulae of the form $\land_{i=1}^d (x_i = x'_i) \land x_k=0$ for varying $k$ (zero-tests).
\emph{Vector Addition Systems with States (VASS)} are Minsky machines without zero-tests.
An \emph{Affine VASS} with $d$ counters ($d$-AVASS) is a $d$-PCM where each transition is labelled by a formula equivalent to an affine function of the form $f(x) = Ax + b$ where $A \in M_d(\mathbb{Z})$ is a $d \times d$ matrix over $\mathbb{Z}$ and $b \in \mathbb{Z}^d$. The domain of such a function would be the (Presburger) set of all $x \in \mathbb{N}^d$ such that $Ax + b \in \mathbb{N}^d$. For convenience, we will denote $d$-AVASS transitions by a pair $(A,b) \in M_d(\mathbb{Z}) \times \mathbb{Z}^d$.
Note that AVASS is an extension of VASS where transitions are not labelled by vectors but by affine functions $(A_i,b_i)$.
Let us define positive and totally-positive AVASS. A \emph{positive} AVASS $S$ is an AVASS such that every matrix $A_i$ of $S$ is positive. This model has been studied for instance in \cite{FMP-wstsPN-icomp}.
A \emph{totally-positive} AVASS $S$ is a positive AVASS such that every vector $b_i$ of $S$ is positive. For totally positive AVASS, an instance of the boundedness problem has been shown decidable in  \cite{FMP-wstsPN-icomp}. Note that we say something is positive if it is greater than or equal to $0$, not strictly greater than $0$.

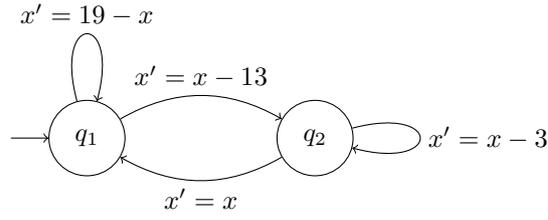
\begin{figure}
    \centering
        \resizebox{!}{!}{
\begin{tikzpicture}
\tikzstyle{every node}+=[inner sep=0pt]
\node(s1) at (0,0) [circle, draw, minimum size=1cm] {$q_1$};
\node(s2) at (3,0) [circle, draw, minimum size=1cm] {$q_2$};

\draw [->] (s1) to [out=30, in=150] node [midway, above = 0.1] {$x' = x - 13$} (s2);

\draw [->] (s1) to [out=105, in=75, looseness=12] node [midway, above = 0.1] {$x' = 19 - x$} (s1);

\draw [->] (s2) to [out=15, in=345, looseness=12] node [midway, right = 0.1] {$x' = x - 3$} (s2);

\draw [->] (s2) to [out=210, in=330] node [midway, below = 0.1] {$x' = x$} (s1);

\draw [->] (-1,0) -- (s1);
\end{tikzpicture}
        }
\caption{The counter machine $M_1$ }
\label{figure:example-avass}
\end{figure}

\begin{example} \label{example:avass}
The machine $M_1$ in Figure \ref{figure:example-avass} is a $1$-AVASS but it is not a $1$-VASS because there is a negative transition from $q_1$ to $q_1$.
\end{example}

\begin{proposition} \cite{DBLP:journals/jancl/DemriFGD10}
Checking whether a given PCM is a VASS, AVASS, positive AVASS or a totally positive AVASS is decidable.
\end{proposition}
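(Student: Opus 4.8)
The plan is to reduce the problem, transition by transition, to the decidability of Presburger arithmetic. Write $M=(Q,\Phi,\rightarrow)$. Then $M$ is a VASS (resp.\ AVASS, positive AVASS, totally positive AVASS) iff each of its finitely many transition formulae $\phi(x,x')$ — a quantifier-free Presburger formula with free variables $x_1,\dots,x_d,x'_1,\dots,x'_d$ interpreted over $\mathbb{N}$ — is a translation with an upward-closed guard (resp.\ an affine map $(A,b)$, one with $A\ge 0$, one with $A\ge 0$ and $b\ge 0$). The key point is that, for any fixed candidate parameters ($v$, resp.\ $A$ and $b$), the statement that $\phi$ realizes the corresponding translation/affine transition is a Presburger sentence, hence decidable; the work lies in identifying those parameters. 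I first apply Proposition~\ref{pcm-functional-prop} to decide whether $\phi$ is functional; if not, $\phi$ is none of the four shapes and we are done. So assume $\phi$ defines a partial function $g$ with domain $D=\{x:\exists x'\,\phi(x,x')\}$; note that a quantifier-free description of $D$, a member of $D$ (when $D\neq\emptyset$), and the affine hull $\operatorname{aff}(D)$ are all computable, and that $D=\emptyset$ (equivalently $\phi\equiv\bot$) is decidable and is treated separately as a degenerate case.

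For the VASS test, if $D\neq\emptyset$ pick $p\in D$, compute the unique $q$ with $\phi(p,q)$, and set $v:=q-p\in\mathbb{Z}^d$; then $\phi$ is a translation transition iff $\phi(x,x')$ is equivalent over $\mathbb{N}$ to $x'=x+v\wedge\exists y\,\phi(x,y)$ and the guard $D$ is upward closed, i.e.\ $\forall x,x'\,[(\exists y\,\phi(x,y)\wedge x\le x')\to\exists y\,\phi(x',y)]$ — both being Presburger sentences. This is correct because a translation transition is determined by its vector, which is forced to equal $v$, and by its guard, which is forced to equal $D$ and must be upward closed. (If $D=\emptyset$, i.e.\ $\phi\equiv\bot$, it is a vacuous translation transition, hence a VASS transition.)

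For the AVASS test I would reconstruct a candidate affine map. Greedily build a maximal affinely independent tuple $p_0,\dots,p_k\in D$; each extension step asks for a member of $D$ outside $\operatorname{aff}\{p_0,\dots,p_j\}$, which is a Presburger query since membership in the affine hull of fixed points is a conjunction of linear equations. If $k=d$, then $g(p_0),\dots,g(p_d)$ (each computable by evaluating $\phi$) determine over $\mathbb{Q}$ at most one affine map $x\mapsto Ax+b$; $\phi$ is an AVASS transition iff this map has integer coefficients and
\[
\forall x,x'\ \Big[\phi(x,x')\leftrightarrow\Big(\textstyle\bigwedge_{i=1}^{d}(Ax+b)_i\ge 0\ \wedge\ \bigwedge_{i=1}^{d}x'_i=(Ax+b)_i\Big)\Big]
\]
holds (a Presburger sentence, since $A,b$ are now constants), in which case this $(A,b)$ is unique and we further declare $\phi$ positive iff $A\ge 0$ and totally positive iff $A\ge 0$ and $b\ge 0$. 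If $k<d$, the map $g$ pins down the rows of $A$ only along $\operatorname{aff}(D)$, and the constraint $D=\{x\in\mathbb{N}^d:Ax+b\ge 0\}$ must do the rest: here one shows that if a witnessing pair $(A,b)$ exists, then one exists with $\|A\|_\infty$ and $\|b\|_\infty$ bounded by a computable function of the constants appearing in $\phi$ and in quantifier-free descriptions of $D$ and $\operatorname{aff}(D)$, so that finitely many candidates can be enumerated and each tested against the displayed equivalence. Combining the two cases: $M$ is an AVASS iff every transition admits at least one surviving candidate, and a positive (resp.\ totally positive) AVASS iff every transition admits a surviving candidate with $A\ge 0$ (resp.\ $A\ge 0$ and $b\ge 0$). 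Finally, the degenerate case $\phi\equiv\bot$ is an AVASS transition (e.g.\ $A=-I$, $b=-\mathbf{1}$) and a positive AVASS transition (e.g.\ $A=0$, $b=-\mathbf{1}$) but never a totally positive one, since $A,b\ge 0$ puts $0$ in the domain. The main obstacle is precisely the non-full-dimensional case $k<d$: there the affine witness is not unique, and the heart of the argument is the a priori bound on its entries; the full-dimensional and $\phi\equiv\bot$ cases are routine.
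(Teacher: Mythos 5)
The paper itself gives no proof of this proposition: it is imported wholesale from the cited reference \cite{DBLP:journals/jancl/DemriFGD10}, so there is no in-paper argument to measure yours against; I can only assess the proposal on its own terms.

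The easy parts of your reduction are sound: deciding functionality first, the VASS test (the candidate vector $v$ is forced by any single domain point, and both the equivalence with $x'=x+v$ on the domain and upward-closure of the guard are Presburger sentences), and the AVASS test when $D$ is affinely full-dimensional, where the $d+1$ interpolation points determine $(A,b)$ uniquely over $\mathbb{Q}$ and everything thereafter is a Presburger sentence with constant coefficients. The genuine gap is exactly where you park it: the case $k<d$. There you assert, with no argument, that if an integer witness $(A,b)$ exists then one exists with $\|A\|_\infty,\|b\|_\infty$ bounded by a computable function of the data. That claim is the entire content of the hard case, and it is not routine: the witness set consists of the integer points of an affine subspace (the interpolation constraints) that additionally satisfy, for \emph{every} $x\in\mathbb{N}^d\setminus D$, the disjunction $\bigvee_{i}(Ax+b)_i<0$. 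Since the atoms are bilinear in $(A,b)$ and $x$, this is not a Presburger condition on $(A,b)$, and one is left with an infinite family of ``complement of a convex cone'' constraints indexed by a semilinear set; extracting an effective bound requires decomposing $\mathbb{N}^d\setminus D$ into linear sets and analysing the signs of $Ap$ for the period vectors $p$, none of which you do. Moreover, your final step needs the stronger statement that a bounded \emph{positive} (resp.\ totally positive) witness exists whenever any positive (resp.\ totally positive) witness does, which is a further unproved refinement. Until that lemma is supplied, the proposal is a plausible plan rather than a proof of decidability.
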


\subsection{Well structured transition systems} \label{subsection:wsts}

A \emph{transition system} is a tuple $S = (X, \rightarrow)$ where $X$ is a (potentially infinite) set of configurations and $\rightarrow \subseteq X \times X$ is the transition relation between configurations. We denote by $ \xrightarrow{*} $ the reflexive and transitive closure of $\xrightarrow{}$. For a subset $S \subseteq X$, we denote by $Pre(S) := \{t \mid t \rightarrow s \text{ for some } s \in S \}$, and $Pre^*(S) := \{ t \mid t \xrightarrow{*} s \text{ for some } s \in S \}$. Similarly for $Post(S)$ and $Post^*(S)$.

An \emph{ordered transition system} $S = (X, \rightarrow,\leq)$ is a transition system $ (X, \rightarrow)$ with a quasi-ordering $\leq$ on $X$. Given two configurations $x,y \in X$, $x$ is said to \emph{cover} $y$ if there exists a configuration $y' \geq y$ such that $x \xrightarrow{*} y'$. An ordered transition system $S=(X, \rightarrow, \leq)$ is \emph{monotone}, if for all configurations $s,t,s' \in X$ such that $s \rightarrow t$, $s' \geq s$ implies that $s'$ covers $t$.
$S$ is \emph{strongly monotone} if for all configurations $s,t,s' \in X$ such that $s \rightarrow t$, $s' \geq s$ implies that there exists $t' \geq t$ such that $s' \rightarrow t'$.

\begin{definition} \cite{FINKEL200163}
A \emph{well structured transition system} (WSTS) is an ordered transition system $S = (X, \rightarrow, \leq)$ such that $(X,\leq)$ is a wqo and $S$ is monotone.
\end{definition}

The \textit{coverability problem} is to determine, given two configurations $s$ and $t$, whether there exists a configuration $t'$ such that $s \xrightarrow{*} t' \geq t$ ($s$ covers $t$). This problem is one often studied alongside well-structuredness.

Let us consider the usual wqo $\leq$ on $Q \times \mathbb{N}^d$ associated with a $d$-counter machine $M=(Q,\rightarrow)$:
$(q_1; x_1, x_2, ..., x_d) \leq (q_2; y_1, ..., y_d) \iff (q_1 = q_2) \land (\land_{i=1}^d x_i \leq y_i)$.

We say that an arithmetic counter machine $M=(Q,\Phi, \rightarrow)$ is \emph{well structured} (or is a WSTS) iff its associated transition system $S_M$ is a WSTS under the usual ordering.
Since the usual ordering on $(Q \times \mathbb{N}^d,\leq)$ is a wqo, let us remark that the associated ordered transition system $S_M = (Q \times \mathbb{N}^d, \rightarrow, \leq)$ is a WSTS iff $S_M$ is monotone. 

Given a counter machine $M = (Q, \rightarrow)$, the \textit{control-state reachability problem} is that given a configuration $(q; n_1, ..., n_d)$, and a control-state $q'$ whether there exist values of counters $(m_1, ..., m_d)$ such that $(q; n_1, ..., n_d) \xrightarrow{*} (q'; m_1, ..., m_d)$. In this case, we often say that $q'$ is \emph{reachable} from $(q; n_1, ..., n_d)$.

We introduce two new problems related to WSTS and Presburger counter machines. 
\begin{itemize}
\item The \emph{well structured problem}: given a PCM, is it a WSTS? 
\item The \emph{strong well structured problem}: given a PCM, is it a WSTS with strong monotony? 
\end{itemize}

\begin{example} The machine $M_1$ (Figure \ref{figure:example-avass}) is not strongly monotone since we have: $(q_1,0) \xrightarrow{x' = 19 - x} (q_1,19)$. However, we see that $Post^*(q_1, 10) =  \{(q_1, 9), (q_1, 10) \}$.
Therefore we can deduce that $(q_1, 10)$ cannot cover $(q_1,19)$. Hence $M_1$ is not well structured. We give, in Section \ref{section:avass}, an algorithm for deciding whether a $1$-AVASS is well structured.
\end{example}

It is shown in \cite{FINKEL200163} that \emph{almost} every transition system can be turned into a WSTS for the \emph{termination ordering} which is not, in general, decidable. So the problem is not only to decide whether a system is a WSTS in general; we have to choose a \emph{decidable} ordering. We show that deciding whether arbitrary (non-effective) transition systems are well-structured for the usual (decidable) ordering on natural numbers is undecidable.

\begin{proposition} \label{prop:wsp-unde-arith-counter-machine}
The well structured problem for $1$-arithmetic counter machines is undecidable.
\end{proposition}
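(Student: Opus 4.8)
The plan is to reduce from the halting problem for Turing machines, exploiting the fact that an \emph{arithmetic} counter machine (unlike a PCM) may carry transition guards that are arbitrary first-order arithmetic formulae, hence need not be effective. Given a Turing machine $\mathcal{T}$, let $\sigma_{\mathcal{T}}$ be a closed first-order arithmetic ($\Sigma_1$) sentence expressing ``$\mathcal{T}$ halts on the empty input''. I would build a $1$-arithmetic counter machine $M_{\mathcal{T}} = (\{q\}, \Phi, \rightarrow)$ with a single control-state $q$, a single counter $x$, and exactly two transitions, both self-loops at $q$: $\phi_1(x,x') \equiv (x = 0 \wedge x' = 2)$ and $\phi_2(x,x') \equiv (x = 1 \wedge x' = 2 \wedge \neg\sigma_{\mathcal{T}})$. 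Both $\phi_1$ and $\phi_2$ are legitimate formulae, and the description of $M_{\mathcal{T}}$ is computable from $\mathcal{T}$; the machine is non-effective precisely because testing whether the pair $(1,2)$ satisfies $\phi_2$ amounts to deciding whether $\mathcal{T}$ does \emph{not} halt.

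Next I would prove that $M_{\mathcal{T}}$ is well structured if and only if $\mathcal{T}$ does \emph{not} halt. Since $(\{q\} \times \mathbb{N}, \leq)$ is a wqo, it suffices to characterise when $S_{M_{\mathcal{T}}}$ is monotone. If $\mathcal{T}$ does not halt, then $\neg\sigma_{\mathcal{T}}$ holds, so the transition relation of $S_{M_{\mathcal{T}}}$ is exactly $\{(0,2),(1,2)\}$; the only configurations with an outgoing edge are $0$ and $1$, and for the edge $0 \rightarrow 2$ every $s' \geq 0$ either already dominates $2$ (if $s' \geq 2$) or steps to $2$ via $\phi_1$ or $\phi_2$ (if $s' \in \{0,1\}$), while for the edge $1 \rightarrow 2$ every $s' \geq 1$ likewise dominates or reaches $2$; hence $S_{M_{\mathcal{T}}}$ is monotone and $M_{\mathcal{T}}$ is a WSTS. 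Conversely, if $\mathcal{T}$ halts, then $\phi_2$ defines the empty relation and the transition relation is just $\{(0,2)\}$; taking the edge $0 \rightarrow 2$ together with $s' = 1 \geq 0$, we have $Post^*(1) = \{1\}$, so $1$ cannot cover $2$, monotony fails, and $M_{\mathcal{T}}$ is not a WSTS. Thus $M_{\mathcal{T}}$ is a WSTS iff $\mathcal{T}$ does not halt, and undecidability of the halting problem gives undecidability of the well structured problem for $1$-arithmetic counter machines.

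There is no genuine technical obstacle here: the construction is elementary and the only points requiring care are getting the direction of the reduction right (the conditional ``filler'' transition $\phi_2$ must be present exactly in the non-halting case, so its guard carries $\neg\sigma_{\mathcal{T}}$ rather than $\sigma_{\mathcal{T}}$) and verifying the small number of monotonicity cases above. The real content of the statement is the observation that permitting non-effective guards reduces the well structured problem to a simple $\Pi_1$ question about the guard; this should be contrasted with the genuinely harder \emph{effective} case of Theorem \ref{1-dim-pcs-wsts}, where one must instead faithfully simulate a Minsky machine by a functional $1$-PCM.
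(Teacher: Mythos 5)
Your proof is correct and follows essentially the same route as the paper: both embed an undecidable closed sentence into the guard of a tiny one-state machine so that monotony at the configuration $1 \geq 0$ versus the target $2$ hinges on the truth of that sentence (the paper uses a single transition $(x_1=0\wedge x_1'=2)\vee\phi$ and reduces from FO validity, while you split it into two guards and reduce from halting via a $\Sigma_1$ sentence, which is only a cosmetic difference). No gaps.
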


\begin{proof}
	Since first order (FO) logic is undecidable, we can have a reduction from decidability of FO to checking whether a given arithmetic counter machine is well-structured. Let $\phi$ be a given FO formula with no free variables. Define the $1$-arithmetic counter machine $M=(\{q_0\},  \{\phi_0\}, \rightarrow)$, where $\phi_0 = ((x_1 = 0 \land y_1 = 2) \lor \phi)$ and $\rightarrow = \{ (q_0, \phi_0, q_0) \}$ and let $S_M$ be its associated transition system. Hence, if $\phi$ is a tautology, then for all $m, n \geq 0$, the transition $(q_0, m) \rightarrow (q_0,n)$ exists  in $S_M$. Hence $S_M$ is a well-structured transition system. However, if $\phi$ is false, then $S_M$ is not well-structured since there is a transition $(q_0, 0) \rightarrow (q_0, 2)$, but there is no transition from $(q_0, 1) \geq (q_0,0)$ which violates monotony.
	Hence $S_M$ is a WSTS iff $\phi$ is a tautology.
\end{proof}

We now show that restricting to \emph{effective} transition systems does not allow us to decide the property of being a WSTS.

\begin{corollary}
The well structured problem (for the usual ordering on $\mathbb{N}$) for effective transition systems whose set of configurations is included in $\mathbb{N}$ is undecidable.
\end{corollary}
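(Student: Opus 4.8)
The plan is to reduce from the halting problem for $2$-counter Minsky machines, in the spirit of the argument behind Proposition~\ref{prop:wsp-unde-arith-counter-machine}, but replacing the single non-recursive ingredient of that kind of construction --- a transition guarded by ``the simulated machine halts'' --- with a \emph{step-bounded} version of that guard, which is decidable. Concretely, I would fix a $2$-counter Minsky machine $M$ with a distinguished halting state, assuming w.l.o.g.\ that its initial state is not the halting state (so every halting run has length $\geq 1$), and define the transition system $S = (\mathbb{N}, \to)$ by declaring $n \to m$ iff $m = n+2$ and $M$ does not reach its halting state within $n$ steps. Since the guard only asks us to simulate $M$ for $n$ steps, $\to$ is decidable; thus $S$ is an effective transition system whose configuration set is $\mathbb{N}$, ordered by the usual order on $\mathbb{N}$, which is a wqo.

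Next I would prove that $M$ halts iff $S$ is \emph{not} well structured. If $M$ never reaches its halting state, then $n \to n+2$ for every $n$, so from any $s'$ the chain $s' \to s'+2 \to s'+4 \to \cdots$ is unbounded; hence for every transition $s \to t$ and every $s' \geq s$ the set $Post^*(s')$ meets $\{t, t+1, \dots\}$, i.e.\ $S$ is monotone and, $(\mathbb{N},\le)$ being a wqo, a WSTS. If instead $M$ first reaches its halting state after exactly $k_0 \geq 1$ steps, then $n \to n+2$ holds precisely for $n < k_0$, and no $n \geq k_0$ has an outgoing transition; taking $s := k_0-1$, $t := k_0+1$ (so that $s \to t$ exists) and $s' := k_0 \geq s$ gives $Post^*(s') = \{k_0\}$, which contains nothing $\geq t$, so $s'$ does not cover $t$ and $S$ fails monotony --- hence is not a WSTS. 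Since the halting problem for $2$-counter Minsky machines is undecidable, so is the well structured problem for effective transition systems over $\mathbb{N}$, and the corollary follows; the same argument also yields the claim for effective $1$-arithmetic counter machines.

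The one delicate point is keeping the construction effective: the naive guard ``$M$ halts'' is not recursive, so the step-bounded guard ``$M$ halts within $n$ steps'' must be used, and the self-repairing chain $n \to n+2$ must be arranged so that its ability to over-shoot the usual order disappears exactly at step $k_0$ --- this is precisely what turns a would-be covering witness into the stuck configuration $s' = k_0$. Once that is set up, verifying the two cases above is routine, with no corner case to worry about beyond $k_0 \geq 1$, which we have arranged. (Equivalently, one may view $S$ as the machine underlying Proposition~\ref{prop:wsp-unde-arith-counter-machine} with a step counter folded into its control state; the self-contained description above is just shorter to check.)
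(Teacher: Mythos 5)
Your proposal is correct and follows essentially the same route as the paper's proof: a reduction from the halting problem via an effective transition system over $\mathbb{N}$ whose guards are step-bounded (hence decidable) and which is monotone exactly when the simulated machine does not halt. The only differences are cosmetic --- you use $n \to n+2$ with the convention that the halting run has length $\geq 1$, where the paper uses $m \to n$ for all $n$ with the configuration $0$ always enabled as the witness transition --- and both handle the corner case equivalently.
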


\begin{proof} 
There exists a reduction from the Halting Problem as follows:

Given a Turing machine $M$, we define a transition system $S_M=(\mathbb{N}, \rightarrow_M)$ as follows: \\
If $(m = 0) \lor (M \text{ does not halt in } m \text{ steps})$, then, for all $n$, there is a transition $m \rightarrow_M n$. Hence this transition relation $\rightarrow_M$ is decidable.
Now, if $M$ does not halt, then there is a transition $m \rightarrow_M n$ for all $m,n \in \mathbb{N}$. This satisfies monotony, hence in this case, $S_M$ is a WSTS.
However, if $M$ halts in exactly $m$ steps, then there is no transition from $m+1$ but there is, in any case, a transition from $0$ to $n$ for all $n$.  Hence in this case, $S_M$ is not a WSTS.
Therefore, $S_M$ is a WSTS iff $T$ does not halt.
\end{proof}

\subsection{Testing whether an ordering is well}

In the previous results, the usual well ordering on natural numbers is not necessarily the unique \emph{decidable} ordering when considering the well structured problem for counter machines. Let $\leq$ be a decidable quasi ordering relation on $\mathbb{N}^d$. If we are interested in whether a counter machine with this ordering is WSTS, it raises the natural question of whether we can decide if $\leq$ is a wqo. Unfortunately, but unsurprisingly, we first show that this property is undecidable in dimension one ($d=1$).

\begin{proposition} \label{prop:gen-ordering-wqo-undec}
The property for a decidable ordering on $\mathbb{N}$ to be a well ordering is undecidable.
\end{proposition}

\begin{proof}
	We will have a reduction from Halting Problem to show undecidability of checking whether a relation on $\mathbb{N}$ is a wqo.
	
	Let $M$ be a Turing machine and $\leq_M$ its associated decidable relation defined as follows. For all $i,j$: we have $i \leq_M i$ and 
	$i \leq_M i+j$ iff $M$ does not halt in $i+j$ steps and $i+j \leq_M i$ if $M$ halts in at most $i+j$ steps; hence $\mathbb{N}$ is totally ordered by the decidable ordering $\leq_M$. If $M$ does not halt, we have $1 \leq_M 2 \leq_M ...  \leq_M i \leq_M i+1 \leq_M ...$ so $(\mathbb{N}, \leq_M)$ is a \textit{well ordering}. If $M$ halts in exactly $n$ steps, then there is an infinite strictly decreasing sequence $n >_M n+1 >_M n+2 >_M ...$, hence $(\mathbb{N},\leq_M)$ is not a well ordering because it is not well-founded. Therefore, checking whether an ordering $\leq$ encodes a well ordering is undecidable.
\end{proof}

Let us study the case of \emph{Presburger-definable orderings} in $\mathbb{N}^d$.
Among many equivalent characterizations of wqo, we know that a quasi ordering is well iff it satisfies well-foundedness and the finite anti-chain property. Both of these properties can be expressed using monadic second order variables. 
But, it is shown in \cite{DBLP:journals/corr/HorbachVW17} that Presburger Arithmetic with a single monadic variable becomes undecidable. Hence, this cannot directly be used to check if a Presburger-definable ordering is a \emph{wqo}. However, we should take a look at Ramsey quantifiers.

\begin{definition}
	$FO^{2-ram}$ is an extension of first order logic with 2-Ramsey quantifiers. A 2-Ramsey quantifier, denoted by $\exists^{2-ram} x y \phi(x,y)$ is satisfied if there exists an infinite subset of the domain, such that every pair of elements $(x,y)$ of this subset satisfies the formula $\phi(x,y)$.
\end{definition}

For the following section, we will let $\phi(x_1, ..., x_d, y_1, ..., y_d)$ denote a special formula for which we want to check if it encodes a well quasi order.

We will look at the structure $(\mathbb{N}^d, \phi, \leq)$ where $\leq$ denotes the usual ordering on $\mathbb{N}^d$, ie $(x_1, ..., x_d) \leq (y_1, ..., y_d)$ iff $x_i \leq y_i$, for all $i= 1,...,d$. Here $\phi$ is interpreted with two free variables in $\mathbb{N}^d$, ie $\phi( (x_1, ..., x_d), (y_1, ..., y_d))$ iff $\phi(x_1, ..., x_d, y_1, ..., y_d)$ holds.

Since $(\mathbb{N}^d, \phi, \leq)$ is an automatic structure and 2-Ramsey quantifiers preserve regularity for automatic structures (a detailed survey about automatic structures and Ramsey quantifiers can be found in \cite{rubin_2008}), we deduce the following Theorem.

\begin{theorem} \label{fo-ram-logic-dec}
	The structure $FO^{2-ram} (\mathbb{N}^d, \phi, \leq)$ is decidable.
\end{theorem}

Using theorem \ref{fo-ram-logic-dec} we can prove the following:

\begin{proposition} \label{pres-relation-wqo}
The property for a Presburger relation on $\mathbb{N}^d$ to be a well quasi ordering is decidable.
\end{proposition}

\begin{proof}
Given a Presburger formula $\phi(x_1, ..., x_d, y_1, ..., y_d)$, we can check if it encodes a quasi ordering since reflexivity and transitivity are Presburger-expressible.

To check finite anti-chain property, we define the formula $\psi_1$ in $FO^{2-ram} (\mathbb{N}^d, \phi, \leq)$ as follows:
$$\psi_1 = \exists^{2-ram} x y (\lnot \phi(x, y))$$

We observe that $\psi_1$ holds iff the ordering encoded by $\phi$ violates the finite anti-chain property.

To check the well-foundedness property, we define the formula $\psi_2$ in $FO^{2-ram} (\mathbb{N}^d, \phi, \leq)$ as follows:
$$\psi_2 = \exists^{2-ram} x y (\phi(y, x) \lor \lnot (x \geq y))$$

If $\psi_2$ is satisfied, then there exists an infinite set $T \subseteq \mathbb{N}^d$ such that any pair of elements $(x,y)$ with $x,y \in T$,  satsifies $\phi(y, x) \lor \lnot (x \geq y)$. Consider any enumeration of $T$, say $(t_1, t_2, ...)$.  Since $\leq$ is a wqo, we can derive a subsequence $(t_{i_1}, t_{i_2}, ...)$ such that $t_{i_n} < t_{i_{n+1}}$. Since each $(t_{i_n}, t_{i_{n+1}})$ satisfies $\phi(y, x) \lor \lnot (x \geq y)$, hence we can conclude that $\phi(t_{i_{n+1}}, t_{i_n})$ holds for all $n$. Hence the quasi ordering encoded by $\phi$ violates the well-foundedness property.

Similarly, if the quasi ordering encoded by $\phi$ violates the well-foundedness property, then there exists an infinite sequence $(t_1, t_2, ...)$ such that $\phi(t_{n+1}, t_n)$ for all $n \geq 1$. We can again derive a subsequence $(t_{i_1}, t_{i_2}, ...)$ such that $t_{i_n} \leq t_{i_{n+1}}$ and $t_{i_n} \neq t_{i_{n+1}}$ for all $n \geq 1$. 

Now the set $\{ t_{i_n} \mid n \in \mathbb{N} \}$ satisfies the formula $\psi_2$.

Hence we have that $\psi_2$ is satisfiable iff $\phi$ violates the well-foundedness property.

Since the structure $FO^{2-ram} (\mathbb{N}^d, \phi, \leq)$ is decidable, we can check whether $\psi_1$ and $\psi_2$ hold, to decide if $\phi$ encodes a wqo.
\end{proof}

%

\section{The well structured problem for PCM} \label{section:well-structured-problem}

In the sequel, whenever we talk about PCM being WSTS, we will consider the usual quasi ordering on $Q \times \mathbb{N}^d$ defined in subsection \ref{subsection:wsts}.
We introduce a general technique to prove undecidability of checking whether a counter machine of some class is a WSTS. Let $S_0$ be the class of machines we are interested in. We will show reduction from reachability in Minsky machines.

\begin{lemma}\label{gen-undecidability}
Suppose we have a procedure which takes a 2 counter Minsky machine with initial state $M = (Q, \rightarrow, q_0)$ and a control-state $q_1$ as input and generates a machine $N$ of class $S_0$ which satisfies the following two requirements:
\begin{itemize}
\newcounter{props}
\item All control-states in $M$ are reachable implies $N$ is a WSTS. \hfill  (\refstepcounter{props}\theprops\label{prop1})
\item $N$ is a WSTS implies $q_1$ is reachable in $M$ from $(q_0; 0, 0)$. \hfill (\refstepcounter{props}\theprops\label{prop2})
\end{itemize}

Then, the well structured problem for $S_0$ is undecidable.
\end{lemma}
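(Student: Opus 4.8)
The plan is to reduce the control-state reachability problem for 2-counter Minsky machines — which is undecidable — to the well structured problem for $S_0$, using the hypothesised procedure as a black box. The obstacle to a direct reduction is a mismatch of quantifiers: the procedure is driven by the property ``\emph{all} control-states of the input machine are reachable'', whereas control-state reachability asks about a \emph{single} state $q_1$. So the first step is a padding construction on Minsky machines that converts the single-state question into the all-states question.

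Concretely, given a 2-counter Minsky machine with initial state $M = (Q, \rightarrow, q_0)$ and a target control-state $q_1 \in Q$, I would build $M' = (Q, \rightarrow', q_0)$ by keeping all transitions of $M$ and adding, for every $q \in Q$, a transition from $q_1$ to $q$ labelled by the identity translation (guard $\mathbb{N}^2$, increment vector $0$), which is a legitimate Minsky/VASS transition. Since the new edges only leave $q_1$, the state $q_1$ is reachable from $(q_0;0,0)$ in $M'$ iff it is in $M$. Moreover, if $q_1$ is reachable in $M$ then in $M'$ one first reaches $q_1$ and then, via the new edges, every $q \in Q$, so all control-states of $M'$ are reachable; conversely, if $q_1$ is not reachable in $M$ then it is not reachable in $M'$, so $M'$ has an unreachable control-state. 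Hence ``all control-states of $M'$ are reachable'' holds iff ``$q_1$ is reachable from $(q_0;0,0)$ in $M$''.

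Next I would feed $M'$ and $q_1$ to the assumed procedure, obtaining a machine $N$ of class $S_0$ satisfying (\ref{prop1}) and (\ref{prop2}) with respect to $M'$. Combining these with the previous paragraph: if $q_1$ is reachable in $M$, then all control-states of $M'$ are reachable, so by (\ref{prop1}) $N$ is a WSTS; if $q_1$ is not reachable in $M$, then $q_1$ is unreachable in $M'$, so by the contrapositive of (\ref{prop2}) $N$ is not a WSTS. Therefore $N$ is a WSTS iff $q_1$ is reachable from $(q_0;0,0)$ in $M$. Since $(M,q_1) \mapsto N$ is effective and control-state reachability for 2-counter Minsky machines started at the zero configuration is undecidable, the well structured problem for $S_0$ is undecidable.

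The only points that need care are (i) verifying that the padded machine $M'$ still lies in the Minsky-machine format the procedure expects — the identity-translation edges do this — and (ii) invoking the undecidability of control-state reachability (equivalently, halting from $(q_0;0,0)$) for 2-counter Minsky machines, which follows from the standard simulation of Turing machines, or of 3-counter machines initialised at zero, by 2-counter machines. Neither is a genuine difficulty; essentially all the work has been isolated into the hypotheses (\ref{prop1}) and (\ref{prop2}) of the lemma, which the later applications will be responsible for establishing for the concrete classes (functional $1$-PCM, $2$-Minsky machines, etc.).
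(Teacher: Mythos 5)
Your proof is correct, but it takes a genuinely different route from the paper's. The paper gives a \emph{Turing} reduction: assuming a decision procedure for the well structured problem, it enumerates all $2^{|Q|-2}$ subsets $Q'\supseteq\{q_0,q_1\}$, forms the induced sub-machine $M'$ of $M$ for each, runs the hypothesised procedure on every $(M',q_1)$, and observes that some resulting $N'$ is a WSTS iff $q_1$ is reachable in $M$ --- the ``if'' direction using the sub-machine induced by the set of reachable control-states, for which \emph{all} control-states are reachable, so condition (\ref{prop1}) applies. You instead resolve the same quantifier mismatch on the Minsky-machine side: padding $M$ with identity translations out of $q_1$ makes ``all control-states reachable'' equivalent to ``$q_1$ reachable'', after which a \emph{single} call to the procedure yields a machine $N$ that is a WSTS iff $q_1$ is reachable in $M$ from $(q_0;0,0)$. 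The padding is legitimate (identity translations with trivial upward-closed guards are admissible Minsky transitions, and edges leaving only $q_1$ cannot create new paths \emph{into} $q_1$), and the lemma's hypotheses are black-box over arbitrary input Minsky machines, so applying them to $M'$ is fine. Your version is a many-one reduction with one oracle call rather than exponentially many, so it is arguably cleaner and yields a slightly stronger hardness statement; the paper's version has the minor virtue of never modifying the input machine. Both arguments are sound.
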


\begin{proof}
Suppose that  the well structured problem for $S_0$ is decidable.
We will use the above procedure to get an algorithm for Minsky machine reachability.
Fix $(M, q_1)$, where $M = (Q, \rightarrow_M, q_0)$. We want to check if $q_1$ is reachable from $(q_0; 0,0)$.

Let $|Q| = n$. Consider all $2^{n-2}$ subsets $Q' \subseteq Q$ satisfying that $\{q_0, q_1\} \subseteq Q'$.
For each such $Q'$, let $\rightarrow_{Q'}$ denote the restriction of $\rightarrow_M$ to the set $Q' \times Q'$.
Hence, we can associate a Minsky machine $M' = (Q',\rightarrow_{Q'} ,q_0)$ to each such subset $Q'$. We call $M'$ a \emph{sub-machine} of $M$ corresponding to $Q'$.

Now, for each sub-machine $M'$, we consider the machine $N'$ of class $S_0$, generated by the given procedure from $(M', q_1)$. If there exists $M'$ such that $N'$ is a WSTS, then we have that $q_1$ is reachable in $M'$ (by condition (\ref{prop2})), hence in $M$.

On the other hand, if $q_1$ was reachable in $M$, then let $Q_{reach} \subseteq Q$ be the set of all control-states of $M$ which are reachable from $(q_0; 0,0)$. Let its corresponding sub-machine be $M'$. Since all control-states of $M'$ are reachable (by choice of $Q_{reach}$), therefore the corresponding $N'$ will be a WSTS (by condition (\ref{prop1})).

Hence, $q_1$ is reachable in $M$ from $(q_0; 0, 0)$ iff there exists a subset $Q' \subseteq Q$ satisfying that $\{q_0, q_1\} \subseteq Q'$ such that the corresponding sub-machine $M'$ is a WSTS. Since there are only $2^{n-2}$ such subsets, we can check all of them to decide whether $q_1$ is reachable in $M$.

Hence, we have given an algorithm to check reachability in Minsky machine. Therefore, the well structured problem for $S_0$ is undecidable.
\end{proof}

We will use Lemma \ref{gen-undecidability} to prove that the well structured problem for functional $1$-dim PCMs is undecidable. To apply Lemma \ref{gen-undecidability}, we need to give an algorithm which takes a Minsky machine $M = (Q, \rightarrow_M, q_0)$ and a control-state $q_1$, and generates a functional $1$-dim PCM $N_1$ satisfying conditions (\ref{prop1}) and (\ref{prop2}).

\paragraph*{Construction of a functional $1$-dim PCM $N_1$: } \label{construction-n1} Let $(M, q_0)$ be given. The procedure to generate a $1$-dim PCM $N_1$ is as follows:

Let $v_p(n)$ denote the largest power of $p$ dividing $n$. For $M = (Q, \rightarrow_M, q_0)$, we define the $1$-PCM $N_1 = (Q, \rightarrow_N, (q_0, 1))$ with the same set $Q$ of control-states. We will represent the values of the two counters $(m,n)$ by the one-counter values $2^m3^nc$ for any $c$ such that $v_2(c) = v_3(c) = 0$. Conversely, a configuration $(q, n)$ of $N_1$ will correspond to $(q; v_2(n), v_3(n))$ of $M$. Note that, we are allowing multiplication by constants $c$ in $N_1$ as long as $v_2(n)$ and $v_3(n)$ remain unchanged.

Increment/decrement of counters corresponds to multiplication/division by $2$ and $3$ which is Presburger expressible. Similarly, zero-test corresponds to checking divisibility by $2$ and $3$ which is again Presburger-expressible. So first, for each transition in $\rightarrow_M$, we add the corresponding transition to $\rightarrow_N$.

Now, to get the suitable properties of conditions (\ref{prop1}) and (\ref{prop2}), we will add two more types of transitions to $\rightarrow_N$. For each control-state $q$, we add a transition $(q, x_1' = 6x_1 + 1, q_0)$ to $\rightarrow_N$. We shall call it a \emph{"reset-transition"} because $v_2(6x_1+1) = v_3(6x_1+1) = 0$, so this transition corresponds to a counter-reset in $M$ from anywhere regardless of our present configuration. Note that such a transition would not change the reachability set in $M$. This "reset-transition" is crucial in forcing well-structuredness in $N$. Also, we add a transition $(q_0, (x_1 = 0 \land x_1' = 0), q_1)$ to $\rightarrow_N$ to ensure condition (\ref{prop2}). Since the configuration $(q_0, 0)$ cannot be reached from the initial configuration $(q_0, 1)$ during any run of $N_1$, this will also not affect the reachability set of $N_1$. Note that, all of our transitions are functional, hence $N_1$ is a functional $1$-dim PCM.

Now, we show that the construction of $N_1$ satisfies conditions (\ref{prop1}) and (\ref{prop2}).

\begin{lemma} \label{n1-prop-1-lemma}
The functional $1$-dim PCM \hyperref[construction-n1]{$N_1$} satisfies condition (\ref{prop1}).
\end{lemma}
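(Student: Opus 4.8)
The plan is to prove that $N_1$ is \emph{monotone}, which — since the usual ordering on $Q \times \mathbb{N}$ is a wqo, and as already observed, an arithmetic counter machine over the usual ordering is a WSTS iff it is monotone — is exactly what we need. So I would assume that every control-state of $M$ is reachable from $(q_0; 0, 0)$, fix an arbitrary transition $(q, n) \rightarrow_N (q', n')$ of $N_1$ together with an arbitrary configuration $(q, m) \geq (q, n)$, and exhibit a run in $N_1$ from $(q, m)$ to some configuration $(q', N)$ with $N \geq n'$.

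The heart of the argument is the following claim: \emph{if all control-states of $M$ are reachable, then from every configuration $(q, m)$ of $N_1$, every control-state $q'' \in Q$ and every bound $K \in \mathbb{N}$, there is a run of $N_1$ from $(q,m)$ to some $(q'', N)$ with $N \geq K$.} To prove it, first fire the reset-transition $(q, x_1' = 6x_1 + 1, q_0)$ to reach $(q_0, 6m+1)$, then iterate the reset self-loop $(q_0, x_1' = 6x_1 + 1, q_0)$; since $6x + 1 > x$, after enough iterations we reach $(q_0, N_0)$ with $N_0 \geq K$, and at every stage the counter value has the form $6y+1$, hence is coprime to $6$, so $v_2(N_0) = v_3(N_0) = 0$ and $(q_0, N_0)$ encodes the $M$-configuration $(q_0; 0, 0)$ with ``junk'' factor $N_0$. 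Since $q''$ is reachable in $M$, pick a run $(q_0; 0, 0) \xrightarrow{*} (q''; a, b)$ in $M$ and replay the corresponding transitions of $N_1$ from $(q_0, N_0)$; this reaches $(q'', 2^a 3^b N_0)$, and $2^a 3^b N_0 \geq N_0 \geq K$, proving the claim. Applying the claim with $q'' := q'$ and $K := n'$ yields a run from $(q,m)$ to $(q', N)$ with $N \geq n'$, so $(q,m)$ covers $(q',n')$; this establishes monotony, hence that $N_1$ is a WSTS.

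The one point that needs care is the faithfulness of the ``replay'' step: one must check that simulating an $M$-run inside $N_1$ from a configuration whose counter carries the extra factor $N_0$ coprime to $6$ is legitimate, i.e. that all the Presburger guards (divisibility tests) encountered along the way hold. This is immediate from the encoding — for every $c$ coprime to $6$ one has $v_2(2^{x_1}3^{x_2}c) = x_1$ and $v_3(2^{x_1}3^{x_2}c) = x_2$ — and I would record it as a one-line preliminary observation that $(q_1; x) \rightarrow_M (q_2; x')$ implies $(q_1, 2^{x_1}3^{x_2}c) \rightarrow_N (q_2, 2^{x_1'}3^{x_2'}c)$ for every such $c$. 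I do not expect any genuine obstacle: in particular the given transition $(q,n)\rightarrow_N(q',n')$ plays no role beyond fixing the states $q,q'$ and the target value $n'$ — the reset-transitions overwrite whatever it did — so no case analysis on its type (simulation transition, reset-transition, or the extra $(q_0, x_1 = 0 \land x_1' = 0, q_1)$ transition) is needed, and the hypothesis $m \geq n$ is in fact not even used.
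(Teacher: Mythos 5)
Your proof is correct, and its core mechanism is the same as the paper's: fire a reset-transition to return to $q_0$ with a counter value coprime to $6$, then replay a witnessing run of $M$ to the target control-state. Where you differ is in the quantitative bookkeeping. The paper splits into cases on the fired transition: for a reset-transition the single step $(q,n')\rightarrow(q_0,6n'+1)$ already dominates the target, while for a simulation transition it bounds the target by $m\le 3n$ and closes with $2^{n_1}3^{n_2}(6n'+1)\ge 6n'+1\ge 6n+1>3n\ge m$ --- which is precisely where the hypothesis $n'\ge n$ is used. You instead prove the stronger uniform claim that from \emph{any} configuration one can reach \emph{any} control-state of $M$ with an arbitrarily large counter, by iterating the reset self-loop at $q_0$ (which does exist, since reset-transitions are added at every control-state including $q_0$) before replaying; this removes both the case split and the bound $m\le 3n$, and, as you note, renders the hypothesis $(q,m)\ge(q,n)$ superfluous. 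The only cost is that your covering run uses a number of reset iterations that depends on the target value rather than a single reset, which is irrelevant for establishing monotony. Your preliminary observation that $(q_1;x)\rightarrow_M(q_2;x')$ lifts to $(q_1,2^{x_1}3^{x_2}c)\rightarrow_N(q_2,2^{x_1'}3^{x_2'}c)$ for every $c$ coprime to $6$ is exactly the invariant the paper's replay step relies on implicitly, and it is worth stating explicitly as you propose.
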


\begin{proof}
	Suppose that all control-states of $M$ are reachable from $(q_0; 0, 0)$. Then we claim that $N_1$ will be a WSTS. Suppose there is a transition $(q, n) \rightarrow_{N} (q', m)$ and $(q, n')$ is a configuration with $(q, n') \geq (q, n)$. Hence we want to show existence of some path $(q, n') \xrightarrow{*}_{N} (q', m') \geq (q', m)$.
	
	\begin{casesp}
		\item The transition $(q, n) \rightarrow_{N} (q', m)$ is a "reset-transition". Hence $q' = q_0$ and $m = 6n+1$. In this case, note that since $n' \geq n$, the transition $(q, n') \rightarrow_{N} (q_0, 6n'+1) \geq (q_0, m)$ satisfies the requirement.
		
		\item The transition $(q, n) \rightarrow_{N} (q', m)$ is not a "reset-transition". In this case, $m \leq 3n$ because the above transition corresponds, in $M$ to an increment/decrement in $c_1$ or $c_2$ or a zero-test. In each case, we can check that $m \leq 3n$. Let there be a path $(q_0; 0, 0) \xrightarrow{*}_M (q'; n_1, n_2)$ in $M$ for some $n_1$, $n_2$. Such a path exists because all control-states in $M$ are reachable. Hence, we take the "reset-transition" $(q, n') \rightarrow_N (q_0, 6n'+1)$ and follow the corresponding path $(q_0, 6n'+1) \xrightarrow{*}_N (q', 2^{n_1}3^{n_2}(6n'+1)) \geq (q', 3n) \geq (q', m)$. Hence we have again shown monotony to prove that $N_1$ is a WSTS.
	\end{casesp}
	
	Hence we have shown that if all control-states of $M$ are reachable, then $N_1$ is monotone.
\end{proof}

\begin{lemma} \label{n1-prop-2-lemma}
The functional $1$-dim PCM \hyperref[construction-n1]{$N_1$} satisfies condition (\ref{prop2}).
\end{lemma}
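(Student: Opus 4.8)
The plan is to prove the contrapositive direction as stated: assume $N_1$ is a WSTS and deduce that $q_1$ is reachable in $M$ from $(q_0;0,0)$. The key observation is that $N_1$ always contains the transition $(q_0,0) \to_N (q_1,0)$ we added (the transition labelled $x_1=0 \land x_1'=0$), so there is an edge in $S_{N_1}$ from $(q_0,0)$ to $(q_1,0)$. I would then exploit monotony to "lift" this edge: since the initial configuration of $N_1$ is $(q_0,1) \geq (q_0,0)$, monotony of $N_1$ applied to $(q_0,0) \to_N (q_1,0)$ with the larger configuration $(q_0,1)$ gives that $(q_0,1)$ covers $(q_1,0)$, i.e. there is a run $(q_0,1) \xrightarrow{*}_N (q_1,m)$ for some $m \geq 0$. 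In particular the control-state $q_1$ is reachable in $N_1$ from its initial configuration.

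Next I would transfer this reachability statement back to $M$. The construction was designed so that every transition of $N_1$ other than the two added families (reset-transitions and the $(q_0,0)\to(q_1,0)$ edge) corresponds exactly to a transition of $M$ under the encoding $(q,n) \leftrightarrow (q; v_2(n), v_3(n))$; and the reset-transitions correspond to a counter reset sending the $M$-configuration to $(q_0;0,0)$, which does not enlarge the reachability set of $M$ (every control-state reachable after a reset was already reachable from $(q_0;0,0)$). The added edge $(q_0,0)\to(q_1,0)$ is only ever usable from the configuration $(q_0,0)$, which, as noted in the construction, is unreachable from $(q_0,1)$ in $N_1$. Hence any run of $N_1$ from $(q_0,1)$ that reaches control-state $q_1$ must do so by a genuine $M$-transition into $q_1$ (possibly after some resets), so $q_1$ is reachable in $M$ from $(q_0;0,0)$.

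The main obstacle is the bookkeeping in this second step: one must argue carefully that along the witnessing run of $N_1$ the configuration $(q_0,0)$ is never visited, so that the artificial edge into $q_1$ is never the one actually taken, and that reset-transitions can be spliced out of the $M$-run (each reset just restarts the simulation at $(q_0;0,0)$, from which the remainder of the run is a valid $M$-run). Formally I would set up an invariant: for every reachable configuration $(q,n)$ of $N_1$ with $n \geq 1$, the pair $(q; v_2(n), v_3(n))$ is reachable in $M$ from $(q_0;0,0)$ — proved by induction on the length of the run, with the reset-transition case handled by the remark that it lands on an encoding of $(q_0;0,0)$ — and then note that reaching control-state $q_1$ at some $(q_1,m)$ with $m\geq 1$ (which we may assume, as $(q_0,0)$ is unreachable so the edge producing $q_1$ came with $m \geq 1$) yields via this invariant that $q_1$ is reachable in $M$. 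This closes the proof.
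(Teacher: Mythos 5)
Your proposal is correct and follows essentially the same route as the paper's proof: use the added transition $(q_0,0)\to_N(q_1,0)$ together with monotony and $(q_0,1)\geq(q_0,0)$ to conclude that $q_1$ is reachable in $N_1$, then transfer this back to $M$ via the simulation. The paper dispatches the second step with a one-line appeal to ``$N_1$ simulates $M$,'' whereas you spell out the invariant and the handling of reset-transitions and the artificial edge; this is just a more detailed rendering of the same argument, not a different one.
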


\begin{proof}
	Since there is a transition $(q_0, 0) \rightarrow_{N} (q_1, 0)$, we deduce that if $N_1$ is a WSTS, then $(q_0, 1) \xrightarrow{*}_N (q_1, n)$ for some $n$ by monotony because $(q_0, 0) \leq (q_0, 1)$. Also note that since $N_1$ simulates $M$, hence reachability of $q_1$ in $N_1$ implies that $q_1$ is reachable from $(q_0; 0, 0)$ in $M$.
\end{proof}

Since we have provided a construction of functional $1$-dim PCM $N_1$ satisfying conditions (\ref{prop1}) and (\ref{prop2}), from Lemma \ref{gen-undecidability} we have that:

\begin{theorem}\label{1-dim-pcs-wsts}
The well structured problem for functional 1-dim PCMs is undecidable.
\end{theorem}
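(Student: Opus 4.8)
The plan is to instantiate the general undecidability machinery of Lemma~\ref{gen-undecidability} with $S_0$ taken to be the class of functional $1$-dim PCMs. Since that lemma reduces from reachability in $2$-counter Minsky machines — which is undecidable — it suffices to exhibit an effective procedure that, given a $2$-counter Minsky machine $M=(Q,\rightarrow_M,q_0)$ and a control-state $q_1$, outputs a functional $1$-dim PCM $N$ satisfying conditions (\ref{prop1}) and (\ref{prop2}). The procedure is exactly the construction of $N_1$ described above: encode a pair of counter values $(m,n)$ of $M$ by any one-counter value of the form $2^m 3^n c$ with $v_2(c)=v_3(c)=0$, translate increments, decrements and zero-tests of $M$ into multiplications/divisions by $2$ and $3$ and divisibility checks (all Presburger-definable and functional), add for every control-state $q$ a ``reset-transition'' $(q,\,x_1'=6x_1+1,\,q_0)$, and add the transition $(q_0,\,x_1=0\land x_1'=0,\,q_1)$.

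First I would observe that $N_1$ is indeed a functional $1$-dim PCM: every formula used is a quantifier-free Presburger formula (possibly with congruences, which the excerpt has already noted are allowed), and each defines a partial function, so $N_1$ lies in $S_0$ as required. Next I would invoke Lemma~\ref{n1-prop-1-lemma} and Lemma~\ref{n1-prop-2-lemma}, which establish precisely that $N_1$ satisfies (\ref{prop1}) and (\ref{prop2}) respectively. Feeding this construction and these two lemmas into Lemma~\ref{gen-undecidability} immediately yields that the well structured problem for functional $1$-dim PCMs is undecidable, which is the statement of Theorem~\ref{1-dim-pcs-wsts}.

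The genuinely delicate part of the argument — already carried out in the two supporting lemmas — is the design of the reset-transitions. They must do two opposing jobs simultaneously: on the one hand, when all control-states of $M$ are reachable, the reset-transition from any configuration $(q,n')$ lands in $(q_0,6n'+1)$, from which one can replay a fixed witnessing run of $M$ to reach any target control-state $q'$ at a value $2^{n_1}3^{n_2}(6n'+1)$ that is at least as large as the value reached by the original transition $(q,n)\rightarrow_N(q',m)$ (here one uses the elementary bound $m\le 3n$ for every non-reset transition), thereby restoring monotony; on the other hand, the map $x_1\mapsto 6x_1+1$ wipes out both the power of $2$ and the power of $3$, so it corresponds to a harmless counter-reset in $M$ that leaves the reachability set of $M$ unchanged, and the added transition to $q_1$ is dead from the genuine initial configuration $(q_0,1)$, so neither addition corrupts the simulation. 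Since all of this is contained in the referenced lemmas, the proof of the theorem itself is a one-line appeal to Lemma~\ref{gen-undecidability}.

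\begin{proof}
By Lemma~\ref{gen-undecidability}, it suffices to give a procedure that, from a $2$-counter Minsky machine $M=(Q,\rightarrow_M,q_0)$ and a control-state $q_1$, produces a functional $1$-dim PCM satisfying conditions (\ref{prop1}) and (\ref{prop2}). The construction of $N_1$ given above does exactly this: each transition of $N_1$ is labelled by a quantifier-free Presburger formula (with congruences) defining a partial function, so $N_1$ is a functional $1$-dim PCM. By Lemma~\ref{n1-prop-1-lemma}, $N_1$ satisfies condition (\ref{prop1}), and by Lemma~\ref{n1-prop-2-lemma}, $N_1$ satisfies condition (\ref{prop2}). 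Applying Lemma~\ref{gen-undecidability} with $S_0$ the class of functional $1$-dim PCMs, we conclude that the well structured problem for functional $1$-dim PCMs is undecidable.
\end{proof}
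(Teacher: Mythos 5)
Your proof is correct and follows exactly the paper's route: the theorem is obtained by feeding the construction of $N_1$, together with Lemma~\ref{n1-prop-1-lemma} and Lemma~\ref{n1-prop-2-lemma}, into the general reduction of Lemma~\ref{gen-undecidability}. Nothing is missing.
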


Similarly, we can use Lemma \ref{gen-undecidability} to show this result for 2 counter Minsky machines. This construction is as follows:

\subsection*{Well structured problem for 2-Minsky machines} \label{wsp:minky-machine:appendix}

We shall first show a construction for 4 counter Minsky machines satisfying conditions (\ref{prop1}) and (\ref{prop2}).

\begin{figure}
	\begin{subfigure}[b]{\textwidth}
		\centering
		\resizebox{0.9\linewidth}{!}{
			\begin{tikzpicture}[scale=0.2, every node/.style={scale=0.7}]
			\tikzstyle{every node}+=[inner sep=0pt]
			\node at (4, 15) [font=\huge] {$M{:}$};
			
			\node(t0) at (10,15) [circle, draw, minimum size=1.2cm] {$q$};
			\node(t1) at (20,15) [circle, draw, minimum size=1.2cm] {$q'$};
			\node(t2) at (20,10) [circle, draw, minimum size=1.2cm] {$q''$};
			
			\draw[->] (t0) -- node [above=0.1cm] {$c_1{=}0?$} (t1);
			\draw[->] (t0) -- node [above=0.05cm, sloped] {$c_1{\neq}0?$} (t2);
			
			\node at (15, 9) [rotate=270, font=\Large] {$\implies$};
			
			\node at (-16, 0) [font=\huge] {$N{:}$};
			\node(s0) at (-10,0) [circle, draw, minimum size=1.5cm] {$q$};
			\node(s1) at (0,0) [circle,draw, minimum size=1.5cm] {};
			\node(s2) at (10,0) [circle,draw, minimum size=1.5cm] {};
			\node(s3) at (20,0) [circle,draw, minimum size=1.5cm] {};
			\node(s4) at (35,0) [circle,draw, minimum size=1.5cm] {};
			\node(s5) at (35,-10) [circle,draw, minimum size=1.5cm] {};
			\node(s6) at (45,0) [circle,draw, minimum size=1.5cm] {$q'$};
			\node(s7) at (45,-10) [circle,draw, minimum size=1.5cm] {$q''$};
			
			\draw[->] (s0) -- (s1);
			\draw [->] (s1) to [out=105, in=75, looseness=6] node [midway, above=0.1cm] {$c_4\dec$} (s1);
			\draw[->] (s1) -- node [above=0.1cm] {$c_4{=}0?$} (s2);
			\draw [->] (s2) to [out=105, in=75, looseness=6] node [midway, above=0.1cm] {$c_1\dec,c_3\dec,c_4\inc$} (s2);
			\draw[->] (s2) -- node [above=0.1cm] {$c_3{=}0?$} (s3);
			\draw[->] (s3) -- node(mid) [above=0.1cm] {$c_1{=}0?$} (s4);
			\draw[->] (s3) -- node [near start, above=0.1cm, sloped] {$c_1{\neq}0?$} (s5);
			\draw [->] (s4) to [out=105, in=75, looseness=6] node [midway, above=0.1cm] {$c_1\inc,c_3\inc, c_4\dec$} (s4);
			\draw [->] (s5) to [out=105, in=75, looseness=6] node [midway, above] {$c_1\inc,c_3\inc, c_4\dec$} (s5);
			\draw[->] (s4) -- node [above=0.1cm] {$c_4{=}0?$} (s6);
			\draw[->] (s6) to [out=210, in=330] node [below=0.1cm] {$c_4{=}0?$} (s4);
			\draw[->] (s5) -- node [above=0.1cm] {$c_4{=}0?$} (s7);
			\draw[->] (s7) to [out=210, in=330] node [below=0.1cm] {$c_4{=}0?$} (s5);
			\end{tikzpicture}
		}
		\caption{Showing equivalent circuits for zero-tests in $M$}
		\label{c_1=0-figure}
	\end{subfigure}
	\begin{subfigure}[b]{\textwidth}
		\centering
		\resizebox{!}{!}{
			\begin{tikzpicture}[scale=0.2, every node/.style={scale=0.7}]
			\tikzstyle{every node}+=[inner sep=0pt]
			\node(s1) at (35,0) [circle,draw, minimum size=1.5cm] {};
			\node(s2) at (52,0) [circle,draw, minimum size=1.5cm] {};
			\node(s3) at (63,0) [circle,draw, minimum size=1.5cm] {$q_0$};
			
			\draw [->] (s2) to [out=105, in=75, looseness=6] node [midway, above] {$c_1\inc,c_2\inc,c_3\inc$} (s2);
			
			\draw [->] (s1) to [out=60, in=30, looseness=6] node [midway, above=0.1cm] {$c_1\dec$} (s1);
			\draw [->] (s1) to [out=120, in=150, looseness=6] node [midway, above=0.1cm] {$c_2\dec$} (s1);
			\draw [->] (s1) to [out=210, in=240, looseness=6] node [midway, below=0.1cm] {$c_3\dec$} (s1);
			\draw [->] (s1) to [out=330, in=300, looseness=6] node [midway, below=0.1cm] {$c_4\dec$} (s1);
			
			\draw [->] (s1) -- node [above=0.1cm] {$c_1{=}c_2{=}c_3{=}c_4{=}0?$} node [below=0.15cm] {$c_1\inc,c_2\inc,c_3\inc$} (s2);
			
			\draw [->] (s2) -- (s3);
			\draw [->] (29,0) -- (s1);
			\end{tikzpicture}
		}
		\caption{"Reset-circuit" for $N_2$}
		\label{reset-circuit-figure}
	\end{subfigure}
	\caption{Construction of a 4 counter Minsky machine $N_2$} 
	\label{4ctr-minsky-figure}
\end{figure}

\paragraph*{Construction of a 4 counter Minsky machine $N_2$: } \label{construction-n2}
Let $(M, q_1)$ be given. 
We will use $4$ counters to simulate $M$ in such a way that we can get all the desirable properties. We will use a configuration $(q;c_1, c_2, c_3, c_4)$ of $N_2$ to correspond to the configuration $(q; c_1-c_3, c_2-c_3)$ of $M$.

The procedure to generate a $4$-counter Minsky machine $N_2$ is as follows:

Let $M = (Q, \rightarrow_M, q_0)$. We define $N_2 = (Q_0, \rightarrow_N, (q_0; 1,1,1,0))$ where $Q_0$ is a superset of $Q$ as will be made clear. To get $N_2$, we will make the following modifications to $M$:

\begin{itemize}
	\item Replace zero-tests in $M$ with a circuit which checks if the respective counter equals $c_3$. We will use $c_4$ and add required additional control-states to implement such a test as illustrated in figure \ref{c_1=0-figure}.
	\item Now, from each control-state $q$, including the new ones added in previous step, add another circuit as illustrated in figure \ref{reset-circuit-figure} which allows one to reach $(q_0; n, n, n, 0)$ for any $n \geq 1$. Note that $q_0$ is the initial state of $M$. We shall again call it a "reset-circuit" since $(q_0; n, n, n, 0)$ in $N_2$ corresponds to $(q_0, 0, 0)$ in $M$. Hence this transition acts as a counter-reset. Note that adding such a transition to $M$ will not affect its reachability set. The "reset-circuit" will be used to ensure condition (\ref{prop1}).
	\item Finally, add a transition $(q_0; 0,0,0,0) \rightarrow_{N} (q_1; 0,0,0,0)$. This is to ensure property (\ref{prop2}). Note that we can never reach $(q_0; 0,0,0,0)$ from the initial configuration $(q_0; 1,1,1,0)$ in any run of $N_2$ since the "reset-circuit" resets to $(q_0; n, n, n, 0)$ for $n \geq 1$. Hence adding this transition does not affect the reachability set of $N_2$.
\end{itemize}

Now, we will show that the above construction indeed satisfies conditions (\ref{prop1}) and (\ref{prop2}).

\begin{lemma}
	The 4 counter Minsky machine \hyperref[construction-n2]{$N_2$} satisfies condition (\ref{prop1}).
\end{lemma}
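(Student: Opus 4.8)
The plan is to mirror the proof of Lemma~\ref{n1-prop-1-lemma}. Assuming every control-state of $M$ is reachable from $(q_0;0,0)$, I would show that $N_2$ is monotone; since the usual ordering on $Q_0\times\mathbb{N}^4$ is a wqo, this is exactly what it means for $N_2$ to be a WSTS. So I would fix an arbitrary transition $(p;\vec c)\rightarrow_N(p';\vec c')$ of $N_2$ together with a configuration $(p;\vec d)\ge(p;\vec c)$, and exhibit a run $(p;\vec d)\xrightarrow{*}_N(p';\vec d')$ with $(p';\vec d')\ge(p';\vec c')$.

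The case split would be according to the kind of edge $(p;\vec c)\rightarrow_N(p';\vec c')$ is. For the \emph{benign} edges --- the increment and guard-$c_i\ge1$ decrement transitions inherited from $M$, the $c_i\neq0?$ branch of a zero-test gadget, the drain self-loops $c_j\dec$ of the reset gadget, and the two shifting self-loops of a zero-test gadget (the one doing $c_1\dec,c_3\dec,c_4\inc$ and the restoring one doing $c_1\inc,c_3\inc,c_4\dec$) --- the effect vector is fixed and each guard is an interval with the right monotonicity, so firing the identical edge from $(p;\vec d)$ already gives $(p';\vec d+\delta)\ge(p';\vec c+\delta)=(p';\vec c')$. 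The only remaining edges are the zero-test edges: the various $c_i=0?$ and $c_4=0?$ edges of a zero-test gadget, and the conjunctive $c_1{=}c_2{=}c_3{=}c_4{=}0?$ test of the reset gadget. If $\vec d$ already passes the test in question, fire directly. Otherwise I would first try to restore the test from $(p;\vec d)$ using a drain self-loop available at $p$ (this disposes immediately of the reset-gadget test and of the $c_4=0?$ edge whose source carries the $c_4\dec$ self-loop); when no suitable self-loop is available at $p$, route $(p;\vec d)$ into the reset gadget, drain all four counters, and emerge at $(q_0;n,n,n,0)$ for $n$ as large as needed. From there I would invoke the reachability hypothesis: if $p'$ is a control-state of $M$ it is reachable, and if $p'$ is an internal state of a zero-test gadget then the $M$-state $q$ from which that gadget emanates is reachable; in either case I can re-simulate the corresponding $M$-run inside $N_2$ starting from $(q_0;n,n,n,0)$, re-enter the gadget along the branch that passes the relevant zero-test, and use the shifting self-loops to move mass between $c_3$ and $c_4$ until the configuration reached dominates $(p';\vec c')$.

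The step I expect to be the real obstacle is exactly this last one: showing that from $(q_0;n,n,n,0)$ with $n$ large one can reach a configuration above $t=(p';\vec c')$ even when $p'$ lies inside a zero-test gadget and $\vec c'$ carries positive mass in the scratch counters $c_3,c_4$. This forces one to observe that reaching the \emph{yes}-successor of a gadget forces $M$ to visit the originating state $q$ with the tested counter equal to $0$ --- which is where the hypothesis that all control-states of $M$ are reachable is used critically --- and then to track carefully how the shift introduced by the reset gadget ($c_1,c_2,c_3$ all raised to $n$) propagates through the gadget's decrement/restore self-loops, so that after draining $c_3$ and redistributing along the restoring loop all four coordinates end up at least as large as those of $\vec c'$. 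Apart from this bookkeeping the argument follows the same template as Lemma~\ref{n1-prop-1-lemma}; one should also record, as the construction already observes, that adding the reset gadget and the edge $(q_0;0,0,0,0)\rightarrow_N(q_1;0,0,0,0)$ does not enlarge the reachability set of $N_2$, so these additions cannot create new transitions that break monotony in an unrecoverable way.
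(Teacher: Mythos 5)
Your proposal and the paper's proof share the same engine --- jump to $(q_0;N,N,N,0)$ via the reset circuit for $N$ large, then replay inside $N_2$ a run of $M$ witnessing reachability of the target control-state --- so at heart this is the same approach. The difference is in the decomposition: the paper does not case-split at all, but treats \emph{every} transition $(q;\vec n)\rightarrow_N(q';\vec m)$ by reset-and-replay and concludes with the single inequality $(q';N+n_1,N+n_2,N,0)\geq(q';m_1,m_2,m_3,m_4)$. Your preliminary disposal of the ``benign'' edges by firing them verbatim is extra work relative to that, but it is not wasted: the paper's inequality only holds when $m_4=0$ (and tacitly assumes $q'$ is a control-state of $M$), so for targets sitting inside a zero-test gadget with positive scratch counters --- e.g.\ the restoring self-loop, whose target has $c_4$ one less than its source but still possibly positive --- the uniform argument does not literally apply, and your direct-firing case is exactly what rescues those edges.

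Where your sketch is weaker than you acknowledge is the residual hard case you flag yourself. You propose to cover a gadget-internal target with $c_4>0$ by arguing that reaching the yes-branch ``forces $M$ to visit the originating state $q$ with the tested counter equal to $0$''. That does not follow from the hypothesis of condition (\ref{prop1}): the yes-successor is itself a control-state of $M$ and may be reachable in $M$ through other incoming edges, so nothing guarantees $q$ is ever visited with the tested counter zero, and then the only route into the yes-branch internal state carrying $c_4>0$ is unavailable. A similar difficulty arises for the $c_3{=}0?$ edge of the gadget: from a larger source with $c_3>0$ one must drain $c_3$ via the self-loop that decrements $c_1$ in lockstep, which can push $c_1$ below the target value, and reset-and-replay only restores $c_1$ to whatever counter value $M$ can exhibit at $q$. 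Neither your sketch nor the paper's four-line proof closes these cases; the paper simply never confronts them because the configuration it exhibits always has $c_4=0$.
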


\begin{proof}
	Suppose all control-states are reachable in $M$. Let there be a transition $(q; n_1, n_2, n_3, n_4) \rightarrow_N (q'; m_1, m_2, m_3, m_4)$ and a configuration $(q; n_1', n_2', n_3', n_4') \geq (q; n_1, n_2, n_3, n_4)$. Since all control-states in $M$ are reachable, there exists a path $(q_0; 0, 0) \xrightarrow{*}_M (q'; n_1, n_2)$. Choose $N \geq \max\{m_1, m_2, m_3, m_4\}$, and take the "reset-circuit" $(q; n_1', n_2', n_3', n_4') \xrightarrow{*}_N (q_0; N, N, N, 0)$. Then follow the corresponding path to $(q'; N+n_1, N+n_2, N, 0) \geq (q'; m_1, m_2, m_3, m_4)$ to satisfy monotony. Hence, if all control-states in $M$ are reachable it implies that $N_2$ is a WSTS.
\end{proof}

\begin{lemma}
	The 4 counter Minsky machine \hyperref[construction-n2]{$N_2$} satisfies condition (\ref{prop2}).
\end{lemma}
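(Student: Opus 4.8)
The plan is to imitate the proof of Lemma~\ref{n1-prop-2-lemma} (the analogous statement for $N_1$), with the four-counter encoding of $M$ playing the role of the powers-of-two-and-three encoding. Since condition~(\ref{prop2}) asks us to show that $N_2$ being a WSTS forces $q_1$ to be reachable in $M$ from $(q_0;0,0)$, I would assume $N_2$ is monotone and extract from that assumption an explicit $M$-run from $(q_0;0,0)$ to $q_1$.

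First I would use the transition $(q_0;0,0,0,0)\rightarrow_N(q_1;0,0,0,0)$ that was added in the construction, together with the fact that the initial configuration $(q_0;1,1,1,0)$ of $N_2$ dominates $(q_0;0,0,0,0)$. Monotony then yields that $(q_0;1,1,1,0)$ covers $(q_1;0,0,0,0)$, i.e.\ there is a run $(q_0;1,1,1,0)\xrightarrow{*}_N(q_1;\vec m)$ for some $\vec m\in\mathbb{N}^4$; in particular the control-state $q_1$ is reachable in $N_2$ from its initial configuration.

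Next I would translate reachability of $q_1$ in $N_2$ into reachability of $q_1$ in $M$. The crucial observation, already recorded in the construction, is that $(q_0;0,0,0,0)$ is \emph{not} reachable from $(q_0;1,1,1,0)$: the reset-circuit only ever produces configurations $(q_0;n,n,n,0)$ with $n\geq1$, and the simulation transitions together with the zero-test gadget of Figure~\ref{c_1=0-figure} maintain the invariant $c_1,c_2\geq c_3\geq1$ (indeed $c_3$ is only decremented and then immediately restored inside that gadget). Hence the special transition into $q_1$ is never fired along any run starting at $(q_0;1,1,1,0)$, so a run reaching $q_1$ uses only simulation transitions and reset-circuit traversals. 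Taking the suffix of such a run after its last reset-circuit traversal (or the whole run, if there is none), I get a run starting from a configuration of the form $(q_0;n,n,n,0)$ or $(q_0;1,1,1,0)$, both of which encode $(q_0;0,0)$ of $M$, that uses only simulation transitions and ends in $q_1\in Q$; projecting along the correspondence $(q;c_1,c_2,c_3,c_4)\mapsto(q;c_1-c_3,c_2-c_3)$ produces an $M$-run from $(q_0;0,0)$ to $q_1$, which is exactly condition~(\ref{prop2}).

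The one step that needs care is the faithfulness of the simulation used in the projection: I would verify that the gadget of Figure~\ref{c_1=0-figure} genuinely tests ``$c_i=c_3$'' (which encodes ``$c_i=0$'' in $M$) while returning $c_1,c_2,c_3$ to their previous values and $c_4$ to $0$, so that the invariant $c_1,c_2\geq c_3\geq1$ and the encoding are preserved along every simulation step. Once that bookkeeping is dispatched, the argument closes exactly as in Lemma~\ref{n1-prop-2-lemma}, and I do not expect any further obstacle.
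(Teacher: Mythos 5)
Your proposal is correct and follows essentially the same route as the paper: use monotony with the added transition $(q_0;0,0,0,0)\rightarrow_N(q_1;0,0,0,0)$ and the fact that $(q_0;1,1,1,0)\geq(q_0;0,0,0,0)$ to get that $q_1$ is reachable in $N_2$, then conclude reachability of $q_1$ in $M$ because $N_2$ simulates $M$. The paper states this in two lines; you merely spell out the bookkeeping (the special transition is never actually fired, reset-circuit suffixes encode $(q_0;0,0)$, faithfulness of the zero-test gadget) that the paper leaves implicit.
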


\begin{proof}
	Since $N_2$ is a WSTS, $(q_0; 0,0,0,0) \rightarrow_N (q_1; 0,0,0,0) \implies (q_0; 1,1,1,0) \xrightarrow{*} (q_1; n_1, n_2, n_3, n_4)$ which implies $q_1$ is reachable in $M$ since $N_2$ simulates $M$.
\end{proof}

\begin{theorem} \label{2-ctr-minsky-machine}
	The well structured problem for $2$-dim Minsky machines is undecidable.
\end{theorem}

\begin{proof}
	Since we have given the appropriate construction, by Lemma \ref{gen-undecidability}, we have that checking whether 4-counter Minsky machines are WSTS is undecidable.
	
	We will use the fact that a $d$-dim Minsky machine can be simulated by a $2$-counter Minsky machine \cite{Minsky:1967:CFI:1095587}.
	Let $N_2$ be the corresponding 4-counter Minsky machine for $(M, q_1)$ which satisfies conditions (\ref{prop1}) and (\ref{prop2}).
	Let $N_2'$ be the 2-counter Minsky machine which simulates $N_2$. Let $N_2''$ be the 2-counter machine obtained by adding another "reset" circuit to $N_2'$ which allows reachability to $(q_0; 0,0)$ from any configuration. Then, $N_2''$ will satisfy conditions (\ref{prop1}) and (\ref{prop2}) because $N_2''$ is simulating $N_2$ which satisfies conditions (\ref{prop1}), (\ref{prop2}) and we are allowing reset to initial configuration in $N_2''$. Hence, by Lemma \ref{gen-undecidability} again we have that the well structured problem for $2$-dim Minsky machines is undecidable.
\end{proof}

Now, we make the observation that we can perform zero-tests using affine functions. 
%
The basic idea is that a transition $x' = -x$ is only satisfied by a counter whose value is $0$. 
Increments/decrements can already be implemented in $2$-AVASS since translations are affine functions. A zero test on the first counter can be done by having a transition labelled by $\left(
  \left[ {\begin{array}{cc}
   -1 & 0 \\
   0 & 1 \\
  \end{array} } \right],  \left[ {\begin{array}{cc} 0 \\ 0
  
  \end{array} }\right]
  \right)$, and similarly for second counter. Since we can implement both increment/decrements and zero-tests with $2$-AVASS, we can simulate $2$-counter Minsky machines with $2$-AVASS. Note that we can extend this result to $d$-AVASS simulating $d$-counter Minsky machines.

As a direct consequence of this and Theorem \ref{2-ctr-minsky-machine}, we have that:

\begin{corollary} \label{aff-2-vass-undec}
The well structured problem for $2$-AVASS is undecidable.
\end{corollary}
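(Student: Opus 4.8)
The plan is to read off the corollary from Theorem~\ref{2-ctr-minsky-machine} together with the simulation of $2$-counter Minsky machines by $2$-AVASS described just above, observing that the simulation is in fact an \emph{isomorphism of ordered transition systems}, not merely a reachability-preserving encoding. Concretely, given a $2$-counter Minsky machine $M=(Q,\Phi,\rightarrow)$, I construct a $2$-AVASS $M'$ on the same control-state set $Q$ as follows: every increment/decrement transition of $M$ is a translation $x\mapsto x+b$ with $b\in\mathbb{Z}^2$, hence already an affine transition, and is kept verbatim; every zero-test transition $(q,\;x'_1=x_1\wedge x'_2=x_2\wedge x_1=0,\;q')$ is replaced by the AVASS transition $\bigl(q,\;(A_1,0),\;q'\bigr)$ with $A_1=\left[\begin{array}{cc}-1&0\\0&1\end{array}\right]$, and symmetrically a zero-test on the second counter is replaced by the transition with matrix $A_2=\left[\begin{array}{cc}1&0\\0&-1\end{array}\right]$. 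This construction is clearly effective.

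The single point requiring checking is that $S_{M'}$ and $S_M$ coincide as transition systems on $Q\times\mathbb{N}^2$. For the replaced transitions this is the routine computation that the affine map $x\mapsto A_1x$ has natural domain $\{x\in\mathbb{N}^2 : A_1x\in\mathbb{N}^2\}=\{(0,x_2) : x_2\in\mathbb{N}\}$ and acts as the identity on it, which is exactly the semantics of the zero-test formula it replaces; the remaining transitions are untouched, and no new configurations or edges are introduced. Hence $S_{M'}=S_M$, and since the usual wqo on $Q\times\mathbb{N}^2$ is determined solely by this set, $M$ is well structured if and only if $M'$ is.

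Consequently, composing an alleged decision procedure for the well structured problem on $2$-AVASS with the construction $M\mapsto M'$ would decide the well structured problem for $2$-counter Minsky machines, contradicting Theorem~\ref{2-ctr-minsky-machine}; therefore the well structured problem for $2$-AVASS is undecidable. I do not anticipate a real obstacle: everything rests on the elementary verification of the domain and effect of the affine zero-test transitions and on the observation that well-structuredness is a property of the underlying ordered transition system alone, so it transfers across the exact encoding $M\mapsto M'$.
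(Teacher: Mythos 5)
Your proposal is correct and follows essentially the same route as the paper: simulate zero-tests by the affine transitions with matrices $\left[\begin{smallmatrix}-1&0\\0&1\end{smallmatrix}\right]$ and $\left[\begin{smallmatrix}1&0\\0&-1\end{smallmatrix}\right]$, keep increments/decrements as translations, and conclude from Theorem~\ref{2-ctr-minsky-machine}. Your explicit check that the encoding is an exact identity of ordered transition systems (so that well-structuredness, not merely reachability, transfers) is the one point the paper leaves implicit, and it is a worthwhile addition.
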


However, if we consider strong monotony instead of monotony, the above undecidability results can be turned into a decidability result. Strong monotony can be expressed in Presburger arithmetic as follows:

\begin{align*}
\bigwedge\limits_{\phi \in \Phi} ( &\forall x_1 ... \forall x_d \forall x'_1 ... \forall x'_d \forall y_1 ... \forall y_d( (\bigwedge_{i=1}^d x_i \leq y_i) \land \phi(x_1, ..., x_d, x'_1, ..., x'_d) \\
& \implies (\exists y_1' ... \exists y_d' (\bigwedge_{i=1}^d x'_i \leq y_i') \land \phi(y_1, ..., y_d, y_1', ..., y_d'))) )
\end{align*}

Since Presburger arithmetic is decidable, the strong well structured problem for $d$-PCM is decidable.

\begin{theorem}
The strong well structured problem for $d$-PCM is decidable.
\end{theorem}

\begin{remark}
The validity of the formula of strong monotony can also be decided for extended PCM defined in decidable extensions of Presburger Arithmetic.
\end{remark}
%

\section{Decidability results for $1$-AVASS} \label{section:avass}

%
%

Now, let us look at some reachability and coverability results for the various models of AVASS.
First, we can simulate $2$-counter Minsky machines with $2$-AVASS. Since coverability and reachability are undecidable for $2$-counter Minsky machines, we directly have the following result:

\begin{corollary} \label{2-avass-cov-undec}
Control-state reachability, hence coverability is undecidable for $2$-AVASS.
\end{corollary}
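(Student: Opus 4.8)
The plan is to reduce control-state reachability for $2$-counter Minsky machines — which is classically undecidable (it is essentially the halting problem) — to control-state reachability for $2$-AVASS, using the zero-test simulation described just above the statement; undecidability of coverability then follows because control-state reachability of $q'$ is exactly the instance of the coverability problem that asks whether a given configuration covers $(q'; 0, 0)$.

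First I would fix a $2$-counter Minsky machine $M = (Q, \rightarrow_M, q_0)$ and build a $2$-AVASS $M' = (Q, \rightarrow_{M'}, q_0)$ on the same set of control-states: each increment/decrement transition of $M$ on a counter is kept verbatim, since it is already an affine translation whose natural $\mathbb{N}^2$-domain coincides with the Minsky guard (e.g.\ $x_i' = x_i - 1$ has domain $\{x_i \ge 1\}$), and each zero-test on counter $i$ is replaced by the affine transition that negates counter $i$ and leaves the other fixed — for $i = 1$ this is $\left(\left[\begin{smallmatrix} -1 & 0 \\ 0 & 1 \end{smallmatrix}\right], \mathbf{0}\right)$, and symmetrically for $i = 2$.

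The core of the argument is the faithfulness of this simulation, and the only non-routine point is the zero-test: the transition with matrix $\left[\begin{smallmatrix} -1 & 0 \\ 0 & 1 \end{smallmatrix}\right]$ is enabled at $(q; x_1, x_2)$ exactly when $(-x_1, x_2) \in \mathbb{N}^2$, i.e.\ when $-x_1 \ge 0$, which for $x_1 \in \mathbb{N}$ forces $x_1 = 0$; and on an enabled configuration it acts as the identity on the counters. Thus it implements precisely the Minsky zero-test on counter $1$. A straightforward induction on the length of runs then yields a bijection between runs of $M$ starting from $(q_0; 0, 0)$ and runs of $M'$ starting from $(q_0; 0, 0)$ that preserves control-states and counter values, whence a control-state $q'$ is reachable from $(q_0; 0, 0)$ in $M$ iff it is reachable (as a control-state) from $(q_0; 0, 0)$ in $M'$. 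Control-state reachability for $2$-AVASS is therefore undecidable; and since any configuration $\ge (q'; 0, 0)$ has the form $(q'; m_1, m_2)$, this is literally the coverability question ``does $(q_0; 0, 0)$ cover $(q'; 0, 0)$\,?'', so coverability for $2$-AVASS is undecidable as well. There is no genuine obstacle here: the one thing to check with care is that the $\mathbb{N}^2$-domain restriction built into the AVASS semantics is exactly what enforces the zero-tests (and the upward-closed Minsky guards), and beyond that the reduction is a routine step-by-step simulation.
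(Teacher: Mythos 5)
Your proposal is correct and follows essentially the same route as the paper: simulate a $2$-counter Minsky machine by a $2$-AVASS, keeping translations verbatim and implementing the zero-test on counter $i$ via the affine map that negates counter $i$ (whose $\mathbb{N}^2$-domain forces that counter to be $0$ and which then acts as the identity), and then observe that control-state reachability of $q'$ is the coverability question for $(q';0,0)$. The paper gives exactly this simulation immediately before the corollary and derives the statement as a direct consequence.
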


Similarly, we showed in Construction of functional $1$-PCM \hyperref[construction-n1]{$N_1$} that we can also simulate $2$-counter Minsky machines with functional $1$-PCM. Hence, we also have the following:

\begin{corollary} \label{1-pcm-cov-undec}
Control-state reachability, hence coverability is undecidable for functional $1$-PCM.
\end{corollary}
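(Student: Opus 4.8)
The plan is to leverage the construction of the functional $1$-PCM $N_1$ already given in the text, since that construction explicitly embeds a faithful simulation of an arbitrary $2$-counter Minsky machine $M$ into a functional $1$-PCM. The key observation is that the counter encoding $(m,n) \mapsto 2^m 3^n c$ (with $v_2(c) = v_3(c) = 0$) preserves control-state reachability exactly: a run of $M$ from $(q_0; 0,0)$ reaching control-state $q_1$ corresponds to a run of $N_1$ from $(q_0, 1)$ reaching $(q_1, k)$ for some $k$, and conversely. The extra ``reset-transitions'' and the transition $(q_0, (x_1 = 0 \land x_1' = 0), q_1)$ added in the construction do not affect the reachability set of $N_1$ (as argued in the construction), so control-state reachability in $N_1$ holds iff control-state reachability of $q_1$ holds in $M$.

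First I would recall that control-state reachability is undecidable for $2$-counter Minsky machines (a classical result, implicit in the use of Lemma~\ref{gen-undecidability} and Theorem~\ref{2-ctr-minsky-machine}). Then I would argue the reduction: given $(M, q_1)$, build $N_1$ as in the construction, and observe that $q_1$ is reachable from $(q_0; 0,0)$ in $M$ if and only if the control-state $q_1$ is reachable from $(q_0, 1)$ in $N_1$. The forward direction follows because each Minsky transition was copied into $\rightarrow_N$ under the encoding; the backward direction follows because the only ways to enter $q_1$ in $N_1$ are either via a simulated Minsky transition into $q_1$ (which forces $q_1$ reachable in $M$) or via the added transition from $(q_0, 0)$, but $(q_0, 0)$ is never reached from $(q_0, 1)$ in any run of $N_1$ (since $v_2, v_3$ of the counter value stay well-defined and the counter never becomes $0$). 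Hence control-state reachability for functional $1$-PCM is undecidable. Finally, since coverability subsumes control-state reachability — covering the configuration $(q_1, 0)$ is equivalent to reaching control-state $q_1$ with any counter value — undecidability of coverability follows immediately.

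The main obstacle, such as it is, is purely bookkeeping: one must double-check that the auxiliary transitions (reset-transitions and the $(q_0,0) \to (q_1,0)$ transition) genuinely do not create spurious reachability of $q_1$ in $N_1$, i.e. that $(q_0, 0)$ is unreachable from $(q_0, 1)$ and that reset-transitions only produce values $6x_1 + 1$ which, while changing the ``$c$'' factor, never spuriously place us in $q_1$ except through the honest simulation. Both of these points were already established in the construction preceding Theorem~\ref{1-dim-pcs-wsts}, so the corollary is essentially a direct reading-off of that construction. I would keep the proof to a couple of sentences: cite the construction of $N_1$, note reachability of $q_1$ in $N_1$ coincides with reachability of $q_1$ in $M$, invoke undecidability of the latter, and conclude coverability undecidability by the reduction ``cover $(q_1, 0)$''.
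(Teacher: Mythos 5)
Your proposal is correct and follows essentially the same route as the paper: the paper justifies this corollary by the single observation that the construction of the functional $1$-PCM $N_1$ already simulates $2$-counter Minsky machines (via the $2^m3^nc$ encoding), so control-state reachability, and hence coverability, inherits undecidability. Your extra bookkeeping about the reset-transitions and the $(q_0,0)\to(q_1,0)$ transition not creating spurious reachability is a reasonable elaboration but matches arguments already made in the construction itself.
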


Now, let us examine the case of $1$-AVASS. For $1$-AVASS, reachability and consequently coverability is decidable from work done in \cite{FGH-mfcs13}. We show that checking whether it is a WSTS is also decidable. Moreover, we give a simpler proof of decidability of reachability and coverability.

Given $M = (Q, \rightarrow)$ a $1$-AVASS and a final configuration $(q_f, n_f)$ that we want to check reachability for, we present Algorithm \ref{1-aff-vass-reachability-algo} which computes $Pre^*(q_f,n_f)$ as a Presburger formula. A transition $(q, x' = ax+b, q')$ is \textit{positive} if $a \geq 0$. Let a cycle/path in $M$ be called \textit{positive} if all transitions are positive. A cycle  $(q_1, ..., q_k, q_1)$ is called a \emph{simple cycle} if $q_1, ..., q_k$ are all pairwise distinct.

Let us denote by $Pre_q$ the set $Pre^*(q_f,n_f) \cap (\{q\} \times \mathbb{N})$. For a transition $t = (q, x' = ax+b, q')$ and a given subset of $X \subseteq \mathbb{N}$, let $Pre^t(X)$ denote $\{n : an+b \in X\}$. For a simple cycle $c$ rooted at $q$ with an effective guard and transition, extend the above notation $Pre^{c^i}(X)$ for $i$ repetitions of the cycle. Then, let $Pre^{c^*}(X) := \cup_{i \in \mathbb{N}} Pre^{c^i}(X)$. We will conveniently replace $X$ by a formula which denotes a subset of $\mathbb{N}$.

\begin{algorithm}
\caption{Algorithm for computing $Pre^*(q_f,n_f)$ in $1$-AVASS} \label{1-aff-vass-reachability-algo}
\begin{algorithmic}[1]
\Procedure{computePre*}{}
\ForAll {$q \in Q$}
	\State $\phi_q \equiv \bot$
\EndFor
\State $\phi_{q_f} \equiv (n = n_f)$

\ForAll {$q \in Q$}
	\ForAll {simple cycles $c$ rooted at $q$}
		\State $c$.transition = \Call{simplifyTransition}{$c$} 
		\State $c$.guard = \Call{computeGuard}{$c$}
	\EndFor
\EndFor

\State notFinished = True
\While {notFinished}
	\State notFinished = False
	\ForAll {$q \in Q$}
		\State $\phi' = \phi_q$
		\ForAll {transitions $t = (q, x'=ax+b, q') \in \rightarrow$}
			\State \Call{ExploreTransition}{$t$}
			
		\EndFor
		\ForAll {simple cycles $c$ containing $q$}
			\State \Call{ExploreCycle}{$c$}
		\EndFor
		\If {$\phi' \neq \phi_q$}
		\Comment{Check equality as Presburger formulae}
			\State notFinished = True
		\EndIf
	\EndFor
\EndWhile
\EndProcedure
\end{algorithmic}
\end{algorithm}

The algorithm will keep a variable $\phi_q$ for each control-state $q \in Q$ which will store a Presburger formula (with one free variable $n$) denoting the currently discovered subset of $Pre_q$. Let this be denoted by $\llbracket \phi_q \rrbracket$, i.e. $\llbracket \phi_q \rrbracket := \{n : \phi_q(n)\}$. For uniformity, we can assume that $\phi_q$ is a disjunction of formulae of form $range \land mod$ where $range \equiv (r \leq n \leq s)$ ($s$ possibly $\infty$) and $mod \equiv (n =_{d_q} d)$.

We initially simplify each simple cycle into a meta-transition which is the composition of all individual transitions in the cycle. We will also compute the guard of a cycle. Since each positive transition has an upward closed guard and each negative transition has a downward closed guard, the guard of a cycle will be of the form $r \leq n \leq s$ for some $r, s \in \mathbb{N}$ ($s$ possibly $\infty$). Hence, we will only consider a cycle in terms of its guard and its meta-transition.

We use two main procedures in \hyperref[1-aff-vass-reachability-algo]{\textproc{computePre*}}:
\begin{enumerate}
\item \textproc{ExploreTransition}: Given a transition $t = (q, x'=ax+b, q')$, it computes $Pre^t(\phi_{q'})$ and appends it to $\phi_q$.
\item \textproc{ExploreCycle}: Given a simple cycle $c$ rooted at $q$, it computes $Pre^{c^*}(\phi_q)$ and appends it to $\phi_q$.
\end{enumerate}

\begin{lemma} \label{1-avass-pre-c*-computable}
For any transition $t$, and any simple cycle $c$, given $\phi_q$, $Pre^t(\phi_q)$ and $Pre^{c^*}(\phi_q)$ are both Presburger expressible and effectively computable.
\end{lemma}

\begin{proof}
	Let $\phi \equiv \lor_{i=1}^k (range_i \land mod_i)$. First we note that $Pre^{c^*}(\phi_q) = \cup_{i=1}^k Pre^{c^*}(range_i \land mod_i)$, so we will focus on a single $range \land mod$ clause in $\phi$.
	
	For a transition $t$, we can compute $Pre^t$ by looking at each $range \land mod$ clause in $\phi_q$, and computing its inverse. 
	
	Similarly, we can compute $Pre^{c^i}$ for any $i \in \mathbb{N}$. We will show that $Pre^{c^*}(range \land mod)$ is also computable.
	
	Given a cycle $c$ rooted at $q$, with guard $r \leq n \leq s$ ($s$ possibly $\infty$) and the meta-transition $y = ax+b$, let $range \equiv (r_1 \leq n \leq s_1)$ and $mod \equiv (n =_d d_1)$. By $Post^{c^i}(n)$ we denote $i$ successive applications of a cycle to $n$. To show that $Pre^{c^*}(\phi_q)$ is effectively computable, we will look at multiple cases:
	
	\begin{casesp}
		\item $s < \infty$. \\
		In this case, since the cycle can only be fired by a finite number of inputs, we can simply take each input $n$ and compute $Post^{c^i}(n)$  for all $i$ till $c$ can no longer be activated, or it repeats. Then, we can decide whether or not $n \in Pre^{c^*}(range \land mod)$ based on if any of the reachable values satisfy $range \land mod$.
		
		\item $s = \infty$, $a \leq 1$. \\
		Note that since the guard is upward closed, it implies that the cycle is positive. Hence the net transition is $x' = ax+b$ for $a \geq 0$. If $a = 0$, we are done. If $a = 1$, then the cycle is a translation. In this case, $Pre^{c^*}(range \land mod)$ is again computable using modulo relations.
		
		\item $s = \infty$, $a \geq 2$. \\
		Let $N = \ceil{\frac{-b}{a-1}}$. The first thing to observe is that for all $n > N$, $Post^c(n) > n$. Hence, upon repeated application, $n$ keeps increasing. If $s_1 < \infty$, then $ n \geq \max\{s_1, N\} \implies n \notin Pre^{c^*}(range \land mod)$. With only finitely many values left to consider, we can again compute $Post^{c^i}(n)$ for all $r \leq n \leq \max\{s_1, N\}$ to determine whether $n \in Pre^{c^*}(range \land mod)$. If $s_1 = \infty$, then for values $n \geq N$, we only need to be concerned with their value $\mod d$. Hence, we can first compute the set of values $\ell_1, ..., \ell_k \mod d$ such that for some $j$, we have $Post^{c^j}(\ell_i) =_d d_1$. Then we know that $(n \geq N \land n =_d \ell_i) \implies n \in Pre^{c^*}(range \land mod)$. For $n \leq N$, we can again compute $Post^{c^i}(n)$ to determine whether $n \in Pre^{c^*}(range \land mod)$. Since it does not go off to infinity, it will again terminate or repeat.
	\end{casesp}
	
	Thus, we have shown that for any cycle $c$, we can compute $Pre^{c^*}(\phi_q)$.
\end{proof}

With this lemma the algorithm is well-defined. Now let us prove the termination and the correctness of the algorithm.

\begin{proposition}
Algorithm \hyperref[1-aff-vass-reachability-algo]{\textproc{computePre*}} terminates.
\end{proposition}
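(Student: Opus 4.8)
The plan is to show the \textbf{while} loop performs only finitely many iterations, by combining monotonicity of the updates with the observation that every $\phi_q$ arising during the run lives, as a denoted set, in a fixed \emph{finite} collection of subsets of $\mathbb{N}$. First I would record monotonicity: both \textproc{ExploreTransition} and \textproc{ExploreCycle} only disjoin new clauses onto $\phi_q$, so $\llbracket \phi_q \rrbracket$ never shrinks; hence, for each $q$, the sequence of values of $\llbracket \phi_q \rrbracket$ produced over successive iterations is non-decreasing for $\subseteq$, and the loop halts precisely when a full pass over all $q$, all transitions and all simple cycles leaves every $\phi_q$ unchanged as a Presburger set. So it suffices to bound the number of strict increases.

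Next I would pin down the shape of the formulae. Let $P$ be the least common multiple of the moduli $d_q$ produced by \textproc{simplifyTransition}/\textproc{computeGuard} (equivalently, a suitable multiple of the non-zero multiplicative coefficients occurring along simple cycles), and let $B$ be a bound, computable from $M$ and $n_f$, on all additive constants occurring in transitions, in cycle guards, and on $n_f$ itself. I claim there is a threshold $N$, computable from $P$, $B$ and $|Q|$, such that throughout the run each $\llbracket\phi_q\rrbracket$ is a union of residue classes modulo $P$ above $N$, i.e.\ $n \in \llbracket\phi_q\rrbracket \iff n+P \in \llbracket\phi_q\rrbracket$ for all $n \ge N$. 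This is proved by induction on the steps of the algorithm: the base case is clear since initially $\llbracket\phi_{q_f}\rrbracket=\{n_f\}$ and all other $\llbracket\phi_q\rrbracket$ are empty, so any $N>n_f$ works; for the inductive step one checks that $\phi\mapsto Pre^{t}(\phi)$ and $\phi\mapsto Pre^{c^*}(\phi)$ send sets of this shape to sets of this shape for the \emph{same} $N$ and $P$. For $Pre^{t}$ this is immediate: whether $an+b$ lies in a $P$-periodic-above-$N$ set depends only on $an+b \bmod P$, hence only on $n \bmod P$, once $n$ is large enough, and ``large enough'' is uniform because $|a|,|b|\le B$. For $Pre^{c^*}$ this is exactly the content of Lemma~\ref{1-avass-pre-c*-computable}: although the $i$-fold iterate of a cycle introduces an offset growing with $i$, the union over all $i$ that $Pre^{c^*}$ returns collapses to a semilinear set whose period divides $P$ and whose threshold is bounded in terms of $M$ and $n_f$, not in terms of the current $\phi$.

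With the invariant in hand, the conclusion is short: each $\phi_q$ is determined, as a denoted set, by the finite restriction $\llbracket\phi_q\rrbracket \cap [0,N+P)$, so the tuple $(\llbracket\phi_q\rrbracket)_{q\in Q}$ ranges over the finite poset $\bigl(2^{[0,N+P)}\bigr)^{|Q|}$; a non-decreasing sequence in a finite poset is eventually constant, so after finitely many iterations the flag \texttt{notFinished} stays \texttt{False} and the procedure returns. The main obstacle is the inductive step of the second paragraph, specifically getting \emph{iteration-independent} control of the threshold $N$ and period $P$: naively composing cycle iterates pushes thresholds and offsets up without bound, and the reason this does not happen here is precisely that $Pre^{c^*}$ hands back the entire (bounded-complexity) union at once — so the argument leans essentially on Lemma~\ref{1-avass-pre-c*-computable}, and care is needed to extract from its proof that the resulting period and threshold depend only on $M$ and $n_f$.
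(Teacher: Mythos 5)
Your high-level strategy --- monotone iteration over a finite abstract domain --- is appealing, but the invariant it rests on is not established, and the inductive step you sketch for it is false as stated. You claim that $Pre^{t}$ and $Pre^{c^*}$ send sets that are $P$-periodic above $N$ to sets that are $P$-periodic above \emph{the same} $N$. For a decrementing translation $t: x'=x-b$ with $b>0$ we have $Pre^{t}(X)=\{n \ge b : n-b \in X\}$, so if $X$ is $P$-periodic only above $N$ then $Pre^{t}(X)$ is in general $P$-periodic only above $N+b$; over successive passes of the \textbf{while} loop this pushes the threshold up by as much as $B$ each time, so no $N$ fixed in advance survives the induction. Likewise the threshold of $Pre^{c^*}(X)$ \emph{does} depend on the current $X$, not only on $M$ and $n_f$: for the cycle $x'=x-1$ and $X=\{m\}$ one gets $\{n \ge m\}$, and $m$ may be a value manufactured by earlier iterations (e.g.\ $n_f$ plus offsets accumulated along a long path interleaving several distinct simple cycles, which no single call to \textproc{ExploreCycle} saturates), so it is not bounded by data of $M$ and $n_f$ alone. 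Since bounding the number of strict increases of the $\phi_q$ is exactly what termination amounts to, postulating a uniform threshold valid ``throughout the run'' comes close to assuming the conclusion; you flag this yourself as ``the main obstacle,'' but you do not overcome it, and overcoming it is the entire content of the proposition.

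For contrast, the paper's proof avoids any invariant on the intermediate $\phi_q$'s and instead analyses the structure of the \emph{target} set $Pre_q$: if $Pre_q$ is infinite, every sufficiently large value must be brought down to $n_f$ either through a collapsing transition $x'=b$ (in which case one discovered path certifies a whole upward-closed set $\{n \ge N\}$ at once) or through iterating a positive decrementing cycle $x'=x-a$ (in which case membership above the guard depends only on the residue modulo $a$, and one call to \textproc{ExploreCycle} certifies whole residue classes at once). Either way only finitely many ``discoveries'' are needed before $\llbracket \phi_q \rrbracket = Pre_q$, after which the loop exits. If you want to salvage your finite-lattice argument you would need to prove, not assume, that the reachable part of the iteration stays inside a finite sublattice --- which in effect requires an argument of the paper's kind anyway.
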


\begin{proof}
For each $q$, we will show that $Pre_q$ can be obtained in finitely many iterations of the algorithm. Let $q \in Q$ be arbitrary.

\begin{casesp}
\item $Pre_q$ is finite: \\
Each value will be discovered in finitely many iterations, hence $Pre_q$ will be obtained in finitely many iterations.

\item $Pre_q$ is infinite: \\
Since we are talking about reaching $(q_f, n_f)$, we note that the only transitions which can decrease arbitrarily large values are transitions of the form $x' = b$ or $x' = x - a, a > 0$. Hence, since $Pre_q$ has arbitrarily large values, and each run has to reach $n_f$ (i.e. has to be decreased), we can see that there must either be a transition $x'=b$, or a positive cycle with meta-transition $x' = x-a$, reachable from $q$ through a positive path.
	\begin{casesp}
	\item There is a transition $x'=b$: \\
	In this case, there exists $N$ such that for all $n \geq N$, the same path suffices. In this case, once the aforementioned path is discovered, $\{ n : n \geq N \}$ becomes a subset of $\llbracket \phi_q \rrbracket \subseteq Pre_q$, which leaves finitely many values in $Pre_q \setminus \llbracket \phi_q \rrbracket$, which can again be discovered by finitely many additional runs.
	
	\item There are positive cycles with meta-transition $x' = x-a$: \\
	The idea is that we will cover $Pre_q$ when we compute $Pre^{c^*}$ for such a cycle $c$. This is because for such a cycle, all that matters is the value of the counter modulo $a$. Since there are only finitely many distinct values modulo $a$, these will again be discovered in finitely many runs. Hence, each cycle will be discovered in finitely many runs. Therefore since there are finitely many simple cycles, the corresponding values of $Pre_q$ will also be discovered in finitely many runs.
	\end{casesp}
\end{casesp}

Hence, for all $q$, in finitely many runs we will get $Pre_q = \llbracket \phi_q \rrbracket$. At such a point, the algorithm has to stop, hence termination is guaranteed.
\end{proof}

\begin{theorem} \label{aff-1-vass-pre*-computable}
(Correctness) Given a $1$-AVASS $M=(Q, \rightarrow)$ and a configuration $(q,n)$, the algorithm \hyperref[1-aff-vass-reachability-algo]{\textproc{computePre*}} computes $Pre^*(q,n)$ as a Presburger formula.
\end{theorem}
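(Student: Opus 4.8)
**Proof proposal for Theorem \ref{aff-1-vass-pre*-computable} (Correctness).**

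The plan is to show the two inclusions $\llbracket \phi_q \rrbracket \subseteq Pre_q$ (soundness) and $Pre_q \subseteq \llbracket \phi_q \rrbracket$ (completeness) hold at termination, for every $q \in Q$, where $\phi_q$ denotes the final value of the formula variable. Throughout I would use Lemma \ref{1-avass-pre-c*-computable} to know that every operation performed by \textproc{ExploreTransition} and \textproc{ExploreCycle} is well-defined and effective, and the preceding termination proposition to know that the loop does halt, so that ``the final value of $\phi_q$'' makes sense.

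For \textbf{soundness}, I would argue by induction on the number of loop iterations that the invariant $\llbracket \phi_q \rrbracket \subseteq Pre_q$ is maintained for all $q$. Initially $\phi_{q_f} \equiv (n = n_f)$ and $n_f \in Pre^*(q_f,n_f)$ trivially, while all other $\phi_q \equiv \bot$. For the inductive step: if $t = (q, x' = ax+b, q')$ is a transition and $n \in Pre^t(\llbracket \phi_{q'}\rrbracket)$, then $an + b \in \llbracket \phi_{q'} \rrbracket \subseteq Pre_{q'}$ by hypothesis, and $(q,n) \to_M (q', an+b)$ holds (the AVASS guard being exactly $an + b \ge 0$, which is implied by $an+b \in \mathbb{N}$), so $(q,n) \in Pre_q$; thus appending $Pre^t(\phi_{q'})$ to $\phi_q$ preserves the invariant. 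For a simple cycle $c$ rooted at $q$ with meta-transition $f$ and guard $g$, an element of $Pre^{c^i}(\llbracket \phi_q \rrbracket)$ is by definition a value $n$ such that iterating $f$ exactly $i$ times (with each intermediate configuration satisfying the cycle's guard, which is encoded into the meta-transition semantics) lands in $\llbracket \phi_q \rrbracket \subseteq Pre_q$; hence $n \in Pre_q$, so appending $Pre^{c^*}(\phi_q) = \bigcup_i Pre^{c^i}(\phi_q)$ also preserves the invariant. Here I would take care to spell out, using the definitions of \textproc{simplifyTransition} and \textproc{computeGuard}, that a ``run of the meta-transition'' genuinely corresponds to a run of $M$ along the cycle, including the check that all intermediate values stay in $\mathbb{N}$ and satisfy the individual transition guards.

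For \textbf{completeness}, I would show $Pre_q \subseteq \llbracket \phi_q \rrbracket$ by induction on the length of a witnessing run. Fix $(q,n) \in Pre_q$ with a shortest run $\rho: (q,n) \xrightarrow{*}_M (q_f, n_f)$; if $\rho$ is empty then $q = q_f$ and $n = n_f$, already in $\phi_{q_f}$. Otherwise decompose $\rho$: the key observation is that any run in a $1$-AVASS can be written as an alternation of simple cycles and simple (acyclic) path-segments — more precisely, I would peel off the behaviour at $q$ by looking at the maximal prefix of $\rho$ that returns to $q$, which is a concatenation of simple cycles rooted at $q$ interleaved with... actually the cleaner route is: consider the first step out of $q$ along $\rho$. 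Either it goes to a control-state $q' \ne q$ and then never returns to $q$ via that particular minimal run — in which case the suffix is a strictly shorter run from $(q', n')$ with $n' = an+b$, so by induction $n' \in \llbracket \phi_{q'}\rrbracket$ and \textproc{ExploreTransition}$(t)$ places $n$ into $\phi_q$ — or $\rho$ returns to $q$ at some point, and the portion of $\rho$ up to the first return decomposes (since between consecutive visits to $q$ one crosses a simple cycle through $q$) as a single iteration of some simple cycle $c$ through $q$, after which we have a strictly shorter run from $(q, n'')$; by induction $n'' \in \llbracket \phi_q \rrbracket$, and then \textproc{ExploreCycle}$(c)$ (which computes $Pre^{c^*}$, in particular $Pre^{c^1}$) places $n$ into $\phi_q$. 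Since at termination $\phi' = \phi_q$ for every $q$, i.e.\ no further additions are possible, the fixpoint $\phi_q$ already contains every such $n$. Finally, $\llbracket \phi_q \rrbracket = Pre_q$ for all $q$ gives $\bigcup_{q} (\{q\} \times \llbracket \phi_q \rrbracket) = Pre^*(q_f, n_f)$, and the $\phi_q$ are Presburger formulae by Lemma \ref{1-avass-pre-c*-computable}.

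The \textbf{main obstacle} is the decomposition argument in the completeness proof: I must be careful that an arbitrary run between two consecutive visits of a control-state $q$ genuinely factors through \emph{simple} cycles rooted at $q$ (nested sub-cycles must themselves be handled, which is why the induction is on run length rather than on a single structural decomposition), and that the meta-transition/guard abstraction faithfully captures a full traversal of such a cycle — in particular that the guard ``$r \le n \le s$'' computed by \textproc{computeGuard} is exactly the set of counter values from which the whole cycle is firable, which relies on positive transitions having upward-closed guards and negative ones downward-closed guards, so that the conjunction of the guards pulled back through the partial compositions is an interval. Making this abstraction-faithfulness precise, and reconciling it with the fixpoint reasoning (the loop only ever applies $Pre^{c^*}$ to the \emph{current} $\phi_q$, so one needs the induction to thread a shorter run through an already-discovered value), is where the real work lies.
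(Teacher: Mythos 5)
Your core argument---at termination no exploration step can add anything to any $\phi_q$, so if some $n\in Pre_q\setminus\llbracket\phi_q\rrbracket$ existed one could propagate membership backwards along a witnessing run and contradict the fixpoint---is exactly the paper's argument. The paper packages it without induction: pick the largest index $i$ along the run with $n_i\notin\llbracket\phi_{p_i}\rrbracket$, note $n_{i+1}\in\llbracket\phi_{p_{i+1}}\rrbracket$, and observe that \textproc{ExploreTransition} on the single edge from $p_i$ to $p_{i+1}$ would then have changed $\phi_{p_i}$ in the last iteration, so the algorithm could not have terminated. Your soundness half is fine and is in fact more careful than the paper's one-line ``should be clear''.

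The genuine problem is in your completeness case split. The claim that the portion of $\rho$ up to the first return to $q$ ``decomposes as a single iteration of some simple cycle $c$ through $q$'' is false: between two consecutive visits to $q$ the run may revisit some \emph{other} control-state (e.g.\ $q\to p\to p\to q$ with a self-loop at $p$), giving a cycle at $q$ that is not simple and does not factor as an iteration of a single simple cycle rooted at $q$. What you flag as the ``main obstacle''---faithfulness of the simple-cycle decomposition and of the meta-transition/guard abstraction---is actually a non-issue for correctness: \textproc{ExploreCycle} is needed only for \emph{termination} (acceleration), and soundness of what it adds is covered by your invariant argument. For completeness you should simply apply your first case uniformly: whatever the first transition $t=(q,\ldots,p_1)$ of $\rho$ is (a self-loop, the first edge of some cycle, anything), the suffix from $(p_1,n_1)$ is a strictly shorter witnessing run, the induction hypothesis gives $n_1\in\llbracket\phi_{p_1}\rrbracket$, and the fixpoint property of \textproc{ExploreTransition}$(t)$ forces $n\in\llbracket\phi_q\rrbracket$. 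With that one simplification your proof is correct and coincides with the paper's.
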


\begin{proof}
We will show that Algorithm \ref{1-aff-vass-reachability-algo} upon termination will always have $\llbracket \phi_q \rrbracket = Pre_q$.

That $\llbracket \phi_q \rrbracket \subseteq Pre_q$ should be clear. Suppose the algorithm terminates with $\llbracket \phi_q \rrbracket \subsetneq Pre_q$ for some $q \in Q$. For some value $n \in Pre_q \setminus \llbracket \phi_q \rrbracket$, consider a path which covers $(q_2, n_2)$, say the path is $(q,n) \rightarrow (p_1, n_1) \rightarrow ... \rightarrow (p_m, n_m) \rightarrow (q_2, n')$. In such a path, consider the largest $i$, such that $n_i \notin \llbracket \phi_{p_i} \rrbracket$. Now, in the last iteration of the algorithm, since $n_{i+1} \in \llbracket \phi_{p_{i+1}} \rrbracket$ (by choice of $i$), hence, we will explore the edge to include $n_i \in \llbracket \phi_{p_i} \rrbracket$. Hence, the algorithm would not have terminated. Contradiction. Hence, when the algorithm terminates, $\llbracket \phi_q \rrbracket = Pre_q$.
\end{proof}


\begin{example} 
Let us consider machine $M_1$ in Figure \ref{figure:example-avass}. Suppose we want to compute $Pre^*(q_1, 19)$. We begin with $\phi_{q_1} \equiv (n=19)$, $\phi_{q_2} \equiv \bot$. If we apply \textproc{ExploreTransition} to the transition $(q_2, (x'=x), q_1)$, we will get $\phi_{q_2} \equiv (n=19)$. If we now apply \textproc{ExploreCycle} to the cycle $(q_2, x'=x-3, q_2)$, we will get $\phi_{q_2} \equiv (n \geq 19 \land n =_3 1)$. Continuing like this, we end up with $\phi_{q_1} \equiv (n \in \{0, 3, 6, 19\} \lor (n \geq 13 \land n =_3 1) \lor (n \geq 32 \land n =_3 2) \lor (n \geq 45 \land n=_3 0))$ and $\phi_{q_2} \equiv (n \geq 0 \land n =_3 0) \lor (n \geq 19 \land n=_3 1) \lor (n \geq 32 \land n=_3 2)$. This is $Pre^*(q_1, 19)$.
\end{example}

\begin{corollary} \label{1-avass-reachability}\cite{FGH-mfcs13}
Reachability (hence coverability and control-state reachability) for $1$-AVASS is decidable.
\end{corollary}

\begin{proof}
	Suppose we want to check reachability of $(q_2, n_2)$ from $(q_1, n_1)$.
	Once we have computed $Pre^*(q_2, n_2)$, we can check easily whether $(q_1, n_1) \in Pre^*(q_2, n_2)$ to solve reachability. Once we have reachability, we can show a reduction from coverability to reachability to show that coverability is also decidable for $1$-AVASS, as follows.
	Suppose we want to check coverability of $(q_2, n_2)$. We can add two new control-states $q_3$ and $q_4$ and add the transitions $(q_2, (x' = x - n_2), q_3)$ and $(q_3, (x' = 0), q_4)$. Now, $(q_4, 0)$ is reachable iff $(q_2, n_2)$ coverable.
\end{proof}

\begin{remark}
Algorithm \ref{1-aff-vass-reachability-algo} also works if we extend the model of $1$-AVASS with Presburger guards at each transition. Hence, reachability, coverability and the well-structured problem are all decidable for this model as well.
\end{remark}

It could be useful to determine whether an $1$-AVASS is a WSTS (with strict monotony) because if it is the case, it will allow to decide other problems like the boundedness problem that is not immediately a consequence of the computability of $Pre^*(\suparrow(q,n))$. Since we can compute $Pre^*(q,n)$, we can also compute $Pre^*(\suparrow(q,n))$ by the same technique as in Corollary \ref{1-avass-reachability}. This can be used to determine whether a given $1$-AVASS is a WSTS as follows.
\begin{theorem} \label{1-aff-WSTS-dec}
The well structured problem is decidable for $1$-AVASS.
\end{theorem}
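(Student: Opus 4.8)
The plan is to reduce monotony of a $1$-AVASS $M = (Q, \rightarrow)$ to finitely many Presburger validity queries, using the already-established computability of $Pre^*(\suparrow(q,n))$. Recall that since the usual ordering on $Q \times \mathbb{N}$ is a wqo, $M$ is a WSTS iff it is monotone, i.e. iff for every edge $t = (q, x' = ax+b, q') \in \rightarrow$, every $n \in \mathbb{N}$ with $an + b \geq 0$, and every $n' \geq n$, the configuration $(q, n')$ covers $(q', an+b)$, that is, $(q,n') \in Pre^*(\suparrow(q', an+b))$. So it suffices to decide, edge by edge, whether this condition holds for all admissible $n$ and all $n' \geq n$.

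First I would dispose of the \emph{positive} edges. If $a \geq 0$, the guard $\{n \in \mathbb{N} : an+b \geq 0\}$ is upward closed, so $n' \geq n$ and $an+b \geq 0$ imply $an'+b \geq an+b \geq 0$; hence $(q, n') \rightarrow (q', an'+b) \geq (q', an+b)$ and $(q,n')$ covers $(q', an+b)$ outright. Thus a positive edge never violates monotony and can be skipped. This leaves the \emph{negative} edges $a < 0$. Here the crucial observation is that the domain $D_t := \{n \in \mathbb{N} : an + b \geq 0\}$ is \emph{finite}: since $an+b$ is strictly decreasing in $n$, $D_t$ is empty when $b < 0$ and contained in $\{0, 1, \dots, \floor{b/|a|}\}$ otherwise. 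Consequently a negative edge yields only finitely many monotony obligations, one per $n \in D_t$: writing $m := an+b$, we must verify that $\{n' \in \mathbb{N} : n' \geq n\} \subseteq Pre^*(\suparrow(q',m)) \cap (\{q\} \times \mathbb{N})$.

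Each such obligation is decidable: by the extension of Algorithm~\ref{1-aff-vass-reachability-algo} noted after Corollary~\ref{1-avass-reachability}, $Pre^*(\suparrow(q',m))$ is an effectively computable Presburger set; restricting it to control-state $q$ gives a Presburger formula $\psi(n')$ with one free variable, and it remains to decide the validity of $\forall n'\,(n' \geq n \implies \psi(n'))$, which is a Presburger query. Iterating over the finitely many edges of $M$ and, for each negative edge, over the finitely many elements of its domain, we obtain a terminating procedure: $M$ is a WSTS iff none of these finitely many queries fails. The only substantive ingredient is the computation of $Pre^*$ of an upward-closed set, which is already in hand; the new point — and the one that tames the a priori infinite quantification over edges, source values and larger values — is the finiteness of the domain of a negative affine transition, which collapses the test to finitely many Presburger checks. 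I do not anticipate a real obstacle here beyond the bookkeeping of assembling the formulas.
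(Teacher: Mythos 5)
Your proof is correct and runs on the same engine as the paper's: positive edges satisfy monotony outright by upward-closedness of their guards, and negative edges are reduced to membership queries in $Pre^*(\suparrow(\cdot))$, which is computable. The only real difference is how the a priori infinite quantification is made finite. You exploit that the domain $\{n\in\mathbb{N} : an+b\ge 0\}$ of a negative edge is finite and issue one Presburger validity query per pair consisting of a negative edge and a domain element $n$, against $Pre^*(\suparrow(q',an+b))$. The paper instead exploits that $an+b\le b$ for all $n\ge 0$ when $a<0$, so all obligations of a negative edge are subsumed by the single worst case: $M$ is a WSTS iff for every negative transition $(q_1, x'=ax+b, q_2)$ the whole of $\{q_1\}\times\mathbb{N}$ is contained in $Pre^*(\suparrow(q_2,b))$. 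That gives one $Pre^*$ computation per negative edge rather than up to $\floor{b/|a|}+1$ of them, and a cleaner iff-characterization; your version is marginally more expensive but needs no such observation and is equally valid. One small point worth stating explicitly in either argument: when $b<0$ a negative edge has empty domain and imposes no obligation at all (your bound on $D_t$ already handles this; the paper's characterization tacitly assumes $(q_1,0)\rightarrow(q_2,b)$ is a legal step, i.e. $b\ge 0$).
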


\begin{proof}
First we show that $M$ is a WSTS, iff for all negative transitions $(q_1, (x' = ax+b), q_2)$, the set $\{q_1\} \times \mathbb{N}$ is a subset of $Pre^*(\suparrow(q_2, b))$.
For any negative transition $(q_1, (x' = ax+b), q_2)$, we have $(q_1, 0) \rightarrow (q_2, b)$. If $M$ is a WSTS, by monotony, for any $n \geq 0$, there exists a path $(q_1, n) \xrightarrow{*} (q_2, b') \geq (q_2, b)$ because $(q_1, n) \geq (q_1, 0)$. This implies that $\{q_1\} \times \mathbb{N}$ is a subset of $Pre^*(\suparrow(q_2, b))$.

In the other direction, let there be a transition $(q_1, n) \rightarrow (q_2, an+b)$ and $(q_1, n') \geq (q_1, n)$. If the transition is positive, i.e. $a \geq 0$, then we directly have the transition $(q_1, n') \rightarrow (q_2, an'+b) \geq (q_2, an+b)$. If the transition is negative, then we have that $(q_2, an+b) \leq (q_2, b)$. Since $(q_1, n') \in Pre^*(\suparrow(q_2,b))$ (by hypothesis, since it is a negative transition), hence we have that $(q_1, n') \xrightarrow{*} (q_2, b') \geq (q_2, b) \geq (q_2, an+b)$. Hence, $M$ is monotone. Therefore, $M$ is a WSTS iff for all negative transitions $(q_1, (x' = ax+b), q_2)$, the set $\{q_1\} \times \mathbb{N}$ is a subset of $Pre^*(\suparrow(q_2, b))$.

Now, since $Pre^*(\suparrow(q, n))$ is computable, we can check that for each negative transition $(q_1, (x' = ax+b), q_2)$, the set $\{q_1\} \times \mathbb{N}$ is a subset of $Pre^*(\suparrow(q_2, n))$ to determine whether $M$ is a WSTS or not.
\end{proof}

\begin{example} 
Let us consider machine $M_1$ in Figure \ref{figure:example-avass} and its negative transition $(q_1, x'=19-x, q_1)$. We observe that the set $Pre^*(\suparrow(q_1, 19)) = \{q_1, q_2\} \times \{n: n \geq 19\}$ does not contain $\{q_1\} \times \mathbb{N}$, hence machine $M_1$ is not a WSTS.
However, in this example (Figure \ref{figure:example-avass}), if we replace the transition $(q_1, (x'=x-13),q_2)$ by $(q_1, (x'=x+1),q_2)$, we will get a new machine $M_2$ which is still not a $1$-VASS, but it is a WSTS.
\end{example}

Let us focus our attention to positive AVASS now. We know that for positive $1$-AVASS reachability is decidable from Corollary \ref{1-avass-reachability}. We show that reachability is undecidable for positive $2$-AVASS by reduction from Post's Correspondence Problem (PCP) \cite{DBLP:journals/corr/Halava14}. Our result completes the view about decidability of reachability for VASS extensions in small dimensions. As a matter of fact, reachability is undecidable for VASS with two resets in dimension $3$ (to adapt the proof in \cite{dufourd98}), hence for positive $3$-AVASS but it is decidable for VASS with two resets in dimension $2$ \cite{FLS-fsttcs18}. If we replace resets by affine functions, reachability becomes undecidable in dimension two.

Reichert gives in \cite{DBLP:phd/hal/Reichert15} a reduction from the Post correspondence problem to reachability in a subclass of $2$-AVASS and we may remark that his proof is still valid for \emph{positive} $2$-AVASS. Blondin, Haase and Mazowiecki made some similar observations \cite{DBLP:conf/concur/BlondinHM18} for subclasses of $3-\mathbb{Z}$-AVASS, with positive matrices. Our proof is essentially the same as \cite{DBLP:phd/hal/Reichert15}.

\begin{theorem} \label{pos-2-avass-reach-undec}
Reachability is undecidable for positive $2$-AVASS.
\end{theorem}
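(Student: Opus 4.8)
The plan is to reduce from Post's Correspondence Problem (PCP), following Reichert's encoding, and to check that the construction stays inside the \emph{positive} AVASS class. Fix a PCP instance given by two morphisms $u, v : \Sigma^* \to \{1, \dots, k-1\}^*$ over an alphabet $\Sigma = \{a_1, \dots, a_m\}$, where we read finite words over a $(k-1)$-letter alphabet as base-$k$ digit strings. A candidate solution $w = a_{i_1} \cdots a_{i_\ell}$ will be encoded by a run that builds, in the first counter, the natural number whose base-$k$ representation is $u(w)$, and in the second counter the number whose base-$k$ representation is $v(w)$. Appending a letter $a_i$ to $w$ corresponds to replacing the first counter value $x$ by $k^{|u(a_i)|} x + \langle u(a_i) \rangle$ and the second counter value $y$ by $k^{|v(a_i)|} y + \langle v(a_i) \rangle$, where $\langle \cdot \rangle$ denotes the numeric value of a digit block. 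Each such update is an affine map $(x,y) \mapsto (A x + b, \text{similarly})$ with a \emph{diagonal} matrix $A = \operatorname{diag}(k^{|u(a_i)|}, k^{|v(a_i)|})$ having strictly positive (hence nonnegative) entries and a nonnegative offset vector $b$; so every transition matrix is positive, as required. One builds a gadget with $m$ self-loops on a single control-state, one per letter of $\Sigma$, each performing the pair of affine updates above, plus a dedicated transition to a final control-state that is \emph{not} guarded (it is the identity map, trivially affine with positive matrix).

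The key steps, in order, are: (1) write down the encoding and the $m$ letter-transitions explicitly, verifying that each matrix lies in $M_2(\mathbb{N})$ and each offset in $\mathbb{N}^2$, so the machine is a positive $2$-AVASS; (2) argue the \emph{forward} direction: if $w$ is a PCP solution, i.e. $u(w) = v(w)$ as words, then reading the letters of $w$ one at a time from the configuration $(q_0; 0, 0)$ produces a run ending in a configuration $(q_0; N, N)$ with $N$ the common numeric value, after which the final transition reaches the target configuration $(q_f; 0, 0)$ — so the target is reachable; (3) argue the \emph{backward} direction: since $u$ and $v$ map to nonempty words of bounded length and the digits are all in $\{0,\dots,k-1\}$, the base-$k$ decoding of the two counter values is well-defined and injective on the set of reachable configurations, so any run from $(q_0;0,0)$ reaching $(q_f;0,0)$ must, tracing back through the final transition, have passed through some $(q_0; N, N)$; reading off the sequence of self-loops taken yields a word $w$ with $u(w)$ and $v(w)$ both decoding to $N$, and since both have the same length-consistent base-$k$ expansion with no leading-zero ambiguity (one handles the empty-solution convention by forbidding $w = \varepsilon$, e.g. by a one-letter priming step), this forces $u(w) = v(w)$, a PCP solution; (4) conclude that reachability for positive $2$-AVASS is undecidable since PCP is.

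The main obstacle I expect is step (3), the soundness/injectivity argument: one must rule out spurious runs. Concretely, the decoding "base-$k$ value $\mapsto$ word" is only injective if one controls leading zeros and the lengths of the digit blocks — if some $u(a_i)$ or $v(a_i)$ can begin with the digit $0$, two different letter sequences might yield the same counter value. The standard fix (and the one implicit in Reichert's proof) is to choose the PCP encoding so that every $u(a_i), v(a_i)$ is nonempty and, where needed, to reserve digit $0$ as a non-leading "marker" or to work over an alphabet where the first digit of every block is nonzero; alternatively, track block lengths in the control structure. Another subtlety to dispatch carefully is that AVASS transitions have the implicit guard $Ax + b \in \mathbb{N}^d$, which here is always satisfied since $A$ is positive and $b$ nonnegative and the counters start at $0$, so the guard never blocks a legitimate run — this is exactly why the encoding lives happily in the positive subclass and why no spurious \emph{non}-reachability arises. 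Since, as the excerpt notes, this is essentially Reichert's construction and "his proof is still valid for positive $2$-AVASS", the write-up can be kept brief, citing \cite{DBLP:phd/hal/Reichert15} for the detailed verification and emphasizing only the positivity check.
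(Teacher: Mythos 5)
Your overall strategy is the same as the paper's (Reichert's PCP encoding: self-loops performing $x \mapsto k^{|u(a_i)|}x + \langle u(a_i)\rangle$ and similarly for $y$, with diagonal positive matrices), and you correctly flag the leading-zero issue, which the paper resolves by an initial transition $(\vec 0, (1,1)^T)$ that prepends a leading $1$ to both encodings. But there is a genuine gap at the end: you never build a gadget that turns ``the two counters are equal'' into a single-configuration reachability query. Your final transition is the identity map, so from $(q_0; N, N)$ it reaches $(q_f; N, N)$, not $(q_f; 0,0)$ as you claim in step (2); since $N$ varies with the solution, there is no fixed target configuration to ask about, and the backward direction in step (3) collapses (reaching $(q_f;0,0)$ through an identity transition would force $N=0$). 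The paper's fix is a self-loop $(\vec I, (-1,-1)^T)$ on the final state: because AVASS counters must remain in $\mathbb{N}$, the configuration $(q_2;0,0)$ is reachable from $(q_2;n,m)$ iff $n=m$, which is exactly the equality test.

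There is also a structural reason your construction as described cannot work: you insist that every offset vector lies in $\mathbb{N}^2$ as well as every matrix in $M_2(\mathbb{N})$, but that makes the machine \emph{totally positive}, and the paper proves (Theorem \ref{tot-pos-avass-reach-dec}) that reachability is \emph{decidable} for totally-positive AVASS in any dimension. So any correct reduction must use at least one negative offset (or non-positive matrix); positivity of an AVASS only constrains the matrices, not the vectors $b_i$, which is why the decrement loop is admissible. Adding that loop (and your own priming step to exclude the empty solution) repairs the argument and brings it in line with the paper's proof.
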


\begin{proof} 
Suppose we are given an instance of PCP, i.e. we are given $a_1, ..., a_k, b_1, ..., b_k \in \{0,1\}^*$ for some $k \in \mathbb{N}$. We want to check if there exists some sequence of numbers $n_1, ..., n_{\ell} \in \{1, ..., k\}$ such that $a_{n_1}...a_{n_{\ell}} = b_{n_1}...b_{n_{\ell}}$ (concatenated as strings).

We will construct the positive $2$-AVASS as demonstrated in Figure \ref{2-avass-pcp}, where $|a_i|$ refers to the length of the string, and $(a_i)_2$ refers to the number encoded by the string $a_i$ if read in binary (most significant digit to the left). The idea is that we use the two counters to store the value of $(a_{n_1}...a_{n_{\ell}})_2$ and $(b_{n_1}...b_{n_{\ell}})_2$ for any $n_1, ..., n_{\ell}$. But we first increment each counter to keep track of leading zeroes. Now, the configuration $(q_2; 0,0)$ is reachable from $(q_0; 0,0)$ in the positive $2$-AVASS described in Figure \ref{2-avass-pcp} iff the given PCP has an affirmative answer. Hence, checking reachability in positive $d$-AVASS is undecidable for $d \geq 2$.
\end{proof}

\begin{figure}
    \centering
        \resizebox{0.7\linewidth}{!}{
\begin{tikzpicture}
\tikzstyle{every node}+=[inner sep=0pt]
\node(s0) at (-4,0) [circle,draw, minimum size=1cm] {$q_0$};
\node(s1) at (0,0) [circle,draw, minimum size=1cm] {$q_1$};
\node(s2) at (7,0) [circle,draw, minimum size=1cm] {$q_2$};

\draw [->] (s2) to [out=105, in=75, looseness=6] node [midway, above] {$\left( \vec{I}, \begin{bmatrix} -1 \\ -1 \end{bmatrix} \right)$} (s2);
\draw [->] (s1) -- node [above=0.1cm] {$(\vec{I}, \vec{0})$} (s2);
\draw [->] (s0) -- node [above=0.1cm] {$\left( \vec{0}, \begin{bmatrix} 1 \\ 1 \end{bmatrix} \right)$} (s1);

\draw [->] (s1) to [out=60, in=30, looseness=15] node [midway, above right=0.2cm and -1cm] {$\left( \begin{bmatrix} 2^{|a_1|} & 0 \\ 0 & 2^{|b_1|} \end{bmatrix}, \begin{bmatrix}(a_1)_2 \\ (b_1)_2 \end{bmatrix} \right)$} (s1);
\draw [->] (s1) to [out=120, in=150, looseness=15] node [midway, above left=0.2cm and -1cm] {$\left( \begin{bmatrix} 2^{|a_2|} & 0 \\ 0 & 2^{|b_2|} \end{bmatrix}, \begin{bmatrix}(a_2)_2 \\ (b_2)_2 \end{bmatrix} \right)$} (s1);
\draw [->] (s1) to [out=210, in=240, looseness=15] node [midway, below left=0.2cm and -1cm] {$\left( \begin{bmatrix} 2^{|a_3|} & 0 \\ 0 & 2^{|b_3|} \end{bmatrix}, \begin{bmatrix}(a_3)_2 \\ (b_3)_2 \end{bmatrix} \right)$} (s1);
\draw [->] (s1) to [out=330, in=300, looseness=15] node [midway, below right=0.2cm and -1cm] {$\left( \begin{bmatrix} 2^{|a_k|} & 0 \\ 0 & 2^{|b_k|} \end{bmatrix}, \begin{bmatrix}(a_k)_2 \\ (b_k)_2 \end{bmatrix} \right)$} (s1);

\node [below=0.5cm of s1, font=\huge] {$...$};
\draw [->] ($(s0)+(-1.2,0)$) -- (s0);
\end{tikzpicture}
        }
\caption{Construction for undecidability of reachability for positive $2$-AVASS by reduction from PCP.}
\label{2-avass-pcp}
\end{figure}

Also, we note that positive-AVASS are well-structured with strong monotony. Hence coverability is decidable \cite{FMP-wstsPN-icomp}.
If we look at totally-positive AVASS, we can see that coverability is already decidable by the same argument. However, reachability is also decidable.

\begin{theorem} \label{tot-pos-avass-reach-dec}
Reachability is decidable in totally-positive AVASS for any dimension.
\end{theorem}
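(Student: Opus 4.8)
The plan is to exploit the very restrictive shape of the dynamics. Writing each transition as a pair $(A,b)$ with $A\in M_d(\mathbb{N})$ and $b\in\mathbb{N}^d$, note first that since $Ax+b\ge 0$ whenever $x\ge 0$, \emph{every} transition is fireable from \emph{every} configuration: only the control graph constrains which sequences of transitions we may take. Next, the composition of the transitions along any path $\pi$ from $q_0$ to $q_f$ is again an affine map $f_\pi(x)=M_\pi x+c_\pi$ with $M_\pi\in M_d(\mathbb{N})$ and $c_\pi\in\mathbb{N}^d$, so $(q_0,x_0)\xrightarrow{*}(q_f,x_f)$ holds iff there is a path $\pi$ from $q_0$ to $q_f$ in the finite control graph with $M_\pi x_0+c_\pi=x_f$; and if this equality holds then, all quantities being non-negative, automatically $c_\pi\le x_f$ and $M_\pi x_0\le x_f$ componentwise. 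It is precisely the freedom to take $b_i<0$ (i.e.\ to decrement) that is used in the PCP encoding behind Theorem~\ref{pos-2-avass-reach-undec}, and forbidding it is what makes the problem tractable; note also that $Pre^*$ of a point is in general not semilinear here (already the self-loop $x'=2x$ produces the powers of $2$), so a naive backward saturation cannot be the route.

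Using the above, the plan is to reduce exact reachability to a search in a bounded region: I would prove there is a bound $N$, computable from the machine and from $x_f$, such that $(q_0,x_0)\xrightarrow{*}(q_f,x_f)$ holds iff it is witnessed by a run all of whose configurations lie in $\{0,\dots,N\}^d$, which is then decidable by exploring the finite reachability graph inside that box. The key observations, for a witnessing run $c_0\to\cdots\to c_n$ and the suffix products $M_{>i}:=A_{t_n}\cdots A_{t_{i+1}}$, are that for non-negative matrices (i) $M_{>i}$ has a zero column iff some transition among $t_{i+1},\dots,t_n$ has a matrix with a zero column, and (ii) if $M$ has no zero column then $\lVert Mv\rVert_\infty\ge\lVert v\rVert_\infty$ for every $v\ge 0$. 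Since $M_{>i}\cdot\mathrm{val}(c_i)\le x_f$, it follows that as long as, going backward from the end, no \emph{erasing} transition (one whose matrix has a zero column --- a transfer/reset in disguise) has yet been used, the configuration stays in $\{0,\dots,\lVert x_f\rVert_\infty\}^d$. Thus configurations can only be large in coordinates that are about to be erased; I would remove those by induction: decompose the run at its \emph{last} erasing transition $\tau$, whose matrix erases a non-empty set $Z$ of coordinates. The suffix after $\tau$ uses only no-zero-column transitions, hence (by the same inequality) lives in a box of size $c\cdot\lVert x_f\rVert_\infty$ for a machine constant $c$, and is handled by brute force; the prefix up to $\tau$ is an instance of the \emph{same} problem with a \emph{relaxed target} --- reach the source control-state of $\tau$ with the values of certain coordinates prescribed (to values $\le\lVert x_f\rVert_\infty$) and all others free. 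The base case, where no coordinate is constrained, is just control-state reachability, which here is nothing but reachability in the finite control graph.

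The main obstacle is to organise this induction so that it terminates with a genuinely \emph{computable} bound. The subtlety is that relaxing the target need not shrink the problem in an obvious numeric measure: the set of coordinates that matter --- those feeding an eventually-constrained coordinate through the relevant suffix product --- can grow when coordinates are freed, so the right well-founded measure should combine the dimension with a bound on how many erasing transitions a \emph{minimal} witnessing run can use. To close that I would argue that in a minimal run a coordinate cannot be ``built up large and then erased'' arbitrarily often: every loop iterated a non-trivial number of times in a minimal run must increase some coordinate that is read before it is erased and that ultimately feeds an output coordinate, which is bounded by $\lVert x_f\rVert_\infty$; this caps the number of useful iterations and, propagating back through the finitely many erasing events, yields a concrete (in general non-elementary in $d$) value for $N$. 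Once $N$ is in hand the decision procedure --- enumerate runs inside $\{0,\dots,N\}^d$ --- is immediate; this simultaneously reproves decidability of coverability, which however already follows from positive AVASS being WSTS with strong monotony.
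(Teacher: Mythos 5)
Your first half is sound and isolates exactly the right phenomenon: in a totally-positive AVASS a value, once above the target's norm, can only come back down by being multiplied through a zero column, so every configuration occurring after the last ``erasing'' transition of a witnessing run is bounded by $\lVert x_f\rVert_\infty$. (Two small slips that do not affect this: the ``iff'' in your observation (i) holds only in the direction you actually use --- a zero column in a \emph{later} factor need not produce a zero column of the product; and the parenthetical about $Pre^*$ of a point not being semilinear confuses $Pre^*$ with $Post^*$: under the loop $x'=2x$ the set of \emph{predecessors} of a single value is finite.) The genuine gap is the second half. Your decision procedure needs a \emph{computable} bound $N$ on all intermediate configurations of some witnessing run, and the induction you propose to obtain it is never carried out: the ``relaxed target'' subproblem (coordinates outside $Z$ prescribed, those in $Z$ free) is not an instance of the original problem; the set of coordinates relevant at position $i$ going backward, $\{j : M_{>i}[S,j]\neq 0\}$ for the currently constrained set $S$, can both grow and shrink, so there is no evident well-founded measure; and the claim that in a minimal run every nontrivially iterated loop must increase a coordinate that is read before being erased and feeds a (bounded) output coordinate is asserted rather than proved --- non-negative affine loops need not act monotonically in the iteration count (a permutation matrix already defeats the ``each useful iteration adds at least one to a bounded coordinate'' counting), and a loop's effect may pass through coordinates that are themselves later erased. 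As written, no value of $N$ is actually produced, so the algorithm is not yet an algorithm.

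The paper's proof shows that the bounding step you are struggling with is unnecessary. It sets $N=\max_i m_i$, abstracts every value $>N$ to $\omega$ with the rules $0\cdot\omega=0$, $k\cdot\omega=\omega$ and $\omega+k=\omega$ for $k\geq 1$, and observes that total positivity makes this abstraction \emph{exact}: $f_N(t(x))=t(f_N(x))$, precisely because a value above $N$ can only be decreased by a zero matrix entry --- which is the same fact as your zero-column analysis. Since $f_N$ is the identity on the target, exact reachability reduces to reachability in the finite system on $Q\times\{0,\dots,N,\omega\}^d$, with no need to bound the concrete values sitting in the $\omega$-coordinates. If you want to salvage your route, the cleanest repair is to prove this commutation property and let the finite $\omega$-system replace the explicit bound; an explicit $N$ can then be extracted afterwards, if desired, from a shortest abstract run.
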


\begin{proof}
Let $M = (Q, \rightarrow)$ be a totally-positive $d$-AVASS. Given $(q_0; n_1, ..., n_d)$, suppose we want to check reachability of $(q_f; m_1, ..., m_d)$. Let $N = \max\{m_1, ..., m_d\}$. Let $f_N: \mathbb{N} \rightarrow \{1, ..., N, \omega \}$ be the function which is identity on $\{1, ..., N\}$ and maps $\{N+1, ... \}$ to $\omega$. Extend this function to the set $\mathbb{N}^d$ component-wise. Since $M$ is totally-positive, we can restrict our search space from $Q \times \mathbb{N}^d$ to $Q \times \{0, ..., N, \omega \}^d$ by applying $f_N$ to each configuration and using the following arithmetic rules:
$0.\omega = 0$, and  for all $k \geq 1$, $k.\omega = \omega$ and $\omega + k = \omega$.

We claim that if $(q_f; m_1, ..., m_d)$ is reachable, then it is reachable in this restricted search-space.
This follows from the fact that given any element $(n_1, ..., n_d)$ of $\mathbb{N}^d$, and a totally positive transition $t = (A, b)$, we will have that $t(f_N(n_1, ..., n_d)) = f_N(t(n_1, ..., n_d))$ ($t$ acts on $f_N(n_1, ..., n_d)$ to give an element in $\{0, ..., N, \omega\}^d$). This is because a totally positive transition cannot decrease a value other than by multiplying it by $0$, hence any value greater than $N$ will continue to be greater than $N$. 
Also note that, by choice of $N$, $f_N(m_1, ..., m_d) = (m_1, ..., m_d)$.

Once we have this, we can make an induction on the length of the path to see that if $(q_f; m_1, ..., m_d)$ is reachable, it is reachable in the restricted search-space $Q \times \{0, ..., N\}^d$.

Since $Q \times \{0, ..., N, \omega \}^d$ is finite, this shows decidability of reachability.
\end{proof}

\section{Conclusion and perspective}

\begin{figure}
    \centering
        \resizebox{0.7\linewidth}{!}{
\begin{tikzpicture}[]

\pgfdeclarelayer{wsts}
\pgfdeclarelayer{pre}
\pgfdeclarelayer{reach}
\pgfdeclarelayer{cover}
\pgfdeclarelayer{undec}
\pgfdeclarelayer{front}
\pgfsetlayers{undec,cover,reach,pre,wsts,front}

\begin{pgfonlayer}{front}

\node [draw, fill=white] (tot-pos-d-avass) at (0,0) {Totally positive $d$-AVASS};
\node [draw, fill=white] (1-avass) at (5,0) {$1$-AVASS};
\node [draw, fill=white] (1-minsky) at (5, -1) {$1$-Minsky machines};
\node [draw, fill=white] (vass) at (1.8, 1.5) {VASS};
\node [draw, fill=white] (pos-avass) at (1.8, 3) {Positive $d$-AVASS};
\node [draw, fill=white] (2-minsky) at (4, 4.5) {$2$-Minsky machines};
\node [draw, fill=white] (d-avass) at (0, 5)  {$d$-AVASS ($d \geq 2$)};

\draw [->] (1-minsky) -- (1-avass);
\draw [->] (vass) to (pos-avass);
\draw [->] (tot-pos-d-avass) to [out=90, in=260] ([xshift=-15] pos-avass.south);
\draw [->] (1-avass) .. controls ($(pos-avass)+(3,2)$) and ($(2-minsky)+(-2,-2)$) .. ([xshift=20] d-avass.south); 
\draw [->] (pos-avass) to (d-avass);
\draw [->] (2-minsky.west) to (d-avass.east);
\end{pgfonlayer}

\begin{pgfonlayer}{wsts}
\node[draw=yellow, fill=white, opacity=0, fill opacity=0.5, inner xsep=1pt, inner ysep=13, fit={(vass) (pos-avass) (tot-pos-d-avass)}] {};
\node[draw=yellow, fill=yellow, fill opacity=0.3, inner xsep=1pt, inner ysep=13, fit={(vass) (pos-avass) (tot-pos-d-avass)}] (wsts-ellipse) {};
\node[below = -0.4 of wsts-ellipse] {WSTS};
\end{pgfonlayer}

\begin{pgfonlayer}{pre}
\node[draw=green, fill=white, opacity=0, fill opacity=1, inner xsep=10, inner ysep=15, fit={(1-avass) (tot-pos-d-avass) (1-minsky)}] (pre-ellipse) {};
\node[draw=green, fill=green, fill opacity=0.2, inner xsep=10, inner ysep=15, fit={(1-avass) (tot-pos-d-avass) (1-minsky)}] (pre-ellipse) {};
\node[below = -0.5 of pre-ellipse, sloped] {Pre* computable};
\end{pgfonlayer}

\begin{pgfonlayer}{reach}
\node[draw=red, fill=white, opacity=0, fill opacity=1, inner xsep=10, inner ysep=15, fit={(pre-ellipse) (vass)}] (reach-D-ellipse) {};
\node[draw=red, fill=red, fill opacity=0.2, inner xsep=10, inner ysep=15, fit={(pre-ellipse) (vass)}] (reach-D-ellipse) {};
\node[below = -0.5 of reach-D-ellipse, sloped] {Reachability decidable};
\end{pgfonlayer}

\begin{pgfonlayer}{cover}
\node[draw=blue, fill=white, opacity=0, fill opacity=1, inner xsep=10, inner ysep=15, fit={(reach-D-ellipse) (pos-avass)}] (cover-D-ellipse) {};
\node[draw=blue, fill=blue, fill opacity=0.2, inner xsep=10, inner ysep=15, fit={(reach-D-ellipse) (pos-avass)}] (cover-D-ellipse) {};
\node[below = -0.5 of cover-D-ellipse, sloped] {Coverability decidable};
\end{pgfonlayer}

\begin{pgfonlayer}{undec}
\node[draw=gray, fill=white, fill opacity=1, inner xsep=10, inner ysep=15, fit={(cover-D-ellipse) (d-avass) (2-minsky)}] (undec-ellipse) {};
\node[draw=gray, fill=gray, fill opacity=0.2, inner xsep=10, inner ysep=15, fit={(cover-D-ellipse) (d-avass) (2-minsky)}] (undec-ellipse) {};
\node[below = -0.5 of undec-ellipse, sloped] {Coverability undecidable};
\end{pgfonlayer}
\end{tikzpicture}
        }
        \caption{Showing reachability and coverability results for various AVASS models.}
\label{avass-figure}
\end{figure}

We introduced two variants of the well structured problem for PCM and we solve it for many classes of PCMs. Moreover, we answer the decidability questions for reachability and coverability for classes of PCMs and AVASSs (we summarise the results of Section \ref{section:avass} in Figure \ref{avass-figure}).

Many open problems can be attacked like the complexity of reachability for $1$-AVASS (reachability is NP for $1$-VASS and PSPACE for polynomial VASS), the size of $Pre^*$ of a $1$-AVASS (and its relation with the theory of flattable VASS \cite{DBLP:conf/concur/LerouxS04}), and the decidability of the property for a Presburger relation on $\mathbb{N}^d$ to be a well-quasi ordering for $d \geq 2$.

We also open the way to study the decidability of the well structured problems (for various orderings) for many other models like pushdown counter machines, FIFO automata, Petri nets extensions. For instance, we wish to solve the  \emph{well structured problems} for FIFO automata. We know that \emph{lossy} FIFO automata are well structured (for the subword ordering) but what is the class of \emph{perfect} FIFO automata which is well structured (for the prefix ordering)?\\

\noindent\textbf{Acknowledgements.} We would like to thank Georg Zetzsche for
showing us how to extend our previous result about the decidability for a relation to be a well quasi ordering when the relation is defined by a Presburger formula in dimension one (Proposition 10 in \cite{GF-fsttcs19}) to arbitrary dimensions (Proposition \ref{pres-relation-wqo} in this paper).

\bibliography{biblio}

\end{document}